\newtheorem{theorem}{Theorem}
\newtheorem{lemma}{Lemma}
\newtheorem{definition}{Definition}
\newcommand{\figwidth}{8}
\begin{document}
%
% paper title
% Titles are generally capitalized except for words such as a, an, and, as,
% at, but, by, for, in, nor, of, on, or, the, to and up, which are usually
% not capitalized unless they are the first or last word of the title.
% Linebreaks \\ can be used within to get better formatting as desired.
% Do not put math or special symbols in the title.
\title{Millimeter-Wave Full-Duplex UAV Relay: Joint Positioning, Beamforming, and Power Control}

%
%
% author names and IEEE memberships
% note positions of commas and nonbreaking spaces ( ~ ) LaTeX will not break
% a structure at a ~ so this keeps an author's name from being broken across
% two lines.
% use \thanks{} to gain access to the first footnote area
% a separate \thanks must be used for each paragraph as LaTeX2e's \thanks
% was not built to handle multiple paragraphs
%
\author{Lipeng Zhu, ~\IEEEmembership{Student Member,~IEEE,}
        Jun Zhang,
        Zhenyu Xiao,~\IEEEmembership{Senior Member,~IEEE,}
        Xianbin Cao,~\IEEEmembership{Senior Member,~IEEE,}
        Xiang-Gen Xia,~\IEEEmembership{Fellow,~IEEE,}
        and Robert Schober,~\IEEEmembership{Fellow,~IEEE}
%\thanks{This work was supported in part by the National Natural Science Foundation of China (NSFC) (Grant Nos. 91538204, 61571025, 61827901 and 91738301.), the Open Research Fund of Key Laboratory of Space Utilization, Chinese Academy of Sciences (No. LSU-DZXX-2017-02).}
% \thanks{Part of this paper has been submitted to 2019 IEEE Global Communications Conference Workshops \cite{zhu2019FDUAVconf}.}
\thanks{Manuscript received September 14, 2019; revised January 15, 2020; accepted February 16, 2020. Date of publication XX, 2020; date of current version XX, 2020. This work was supported in part by the
National Natural Science Foundation of China (NSFC) under Grant 61827901, Grant 91538204, and Grant 91738301, and in part by the Open Research Fund of Key Laboratory of Space Utilization, Chinese Academy of Sciences, under Grant LSU-DZXX-2017-02. The associate editor coordinating the review of this paper and approving it for publication was Jiayi Zhang. (\emph{Corresponding authors:Jun Zhang; Zhenyu Xiao}.)}
\thanks{L. Zhu, Z. Xiao and X. Cao are with the School of Electronic and Information Engineering, Beihang University, Beijing 100191, China. (zhulipeng@buaa.edu.cn, xiaozy@buaa.edu.cn, xbcao@buaa.edu.cn)}
\thanks{J. Zhang is with the Advanced Research Institute of Multidisciplinary Science, Beijing Institute of Technology, Beijing 100081, China. (buaazhangjun@vip.sina.com)}
\thanks{X.-G. Xia is with the Department of Electrical and Computer Engineering, University of Delaware, Newark, DE 19716, USA. (xianggen@udel.edu)}
\thanks{R. Schober is with the Institute for Digital Communications, Friedrich-Alexander University of Erlangen-Nuremberg, Erlangen 91054, Germany. (robert.schober@fau.de)}
}

% \thanks{This work was supported in part by the National Key Research and Development Program (Grant Nos. 2016YFB1200100), the National Natural Science Foundation of China (NSFC) (Grant Nos. 61571025 and 91538204), the Open Research Fund of Key Laboratory of Space Utilization, Chinese Academy of Sciences (No. LSU-DZXX-2017-02).}

% note the % following the last \IEEEmembership and also \thanks -
% these prevent an unwanted space from occurring between the last author name
% and the end of the author line. i.e., if you had this:
%
% \author{....lastname \thanks{...} \thanks{...} }
%                     ^------------^------------^----Do not want these spaces!
%
% a space would be appended to the last name and could cause every name on that
% line to be shifted left slightly. This is one of those "LaTeX things". For
% instance, "\textbf{A} \textbf{B}" will typeset as "A B" not "AB". To get
% "AB" then you have to do: "\textbf{A}\textbf{B}"
% \thanks is no different in this regard, so shield the last } of each \thanks
% that ends a line with a % and do not let a space in before the next \thanks.
% Spaces after \IEEEmembership other than the last one are OK (and needed) as
% you are supposed to have spaces between the names. For what it is worth,
% this is a minor point as most people would not even notice if the said evil
% space somehow managed to creep in.

\maketitle

% As a general rule, do not put math, special symbols or citations
% in the abstract or keywords.
\begin{abstract}
In this paper, a full-duplex unmanned aerial vehicle (FD-UAV) relay is employed to increase the communication capacity of millimeter-wave (mmWave) networks. Large antenna arrays are equipped at the source node (SN), destination node (DN), and FD-UAV relay to overcome the high path loss of mmWave channels and to help mitigate the self-interference at the FD-UAV relay. Specifically, we formulate a problem for maximization of the achievable rate from the SN to the DN, where the UAV position, analog beamforming, and power control are jointly optimized. Since the problem is highly non-convex and involves high-dimensional, highly coupled variable vectors, we first obtain the conditional optimal position of the FD-UAV relay for maximization of an approximate upper bound on the achievable rate in closed form, under the assumption of a line-of-sight (LoS) environment and ideal beamforming. Then, the UAV is deployed to the position which is closest to the conditional optimal position and yields LoS paths for both air-to-ground links. Subsequently, we propose an alternating interference suppression (AIS) algorithm for the joint design of the beamforming vectors and the power control variables. In each iteration, the beamforming vectors are optimized for maximization of the beamforming gains of the target signals and the successive reduction of the interference, where the optimal power control variables are obtained in closed form. Our simulation results confirm the superiority of the proposed positioning, beamforming, and power control method compared to three benchmark schemes. Furthermore, our results show that the proposed solution closely approaches a performance upper bound for mmWave FD-UAV systems.
\end{abstract}

% Note that keywords are not normally used for peerreview papers.
\begin{IEEEkeywords}
mmWave communications, UAV communications, full-duplex relay, positioning, beamforming, power control.
\end{IEEEkeywords}

% For peer review papers, you can put extra information on the cover
% page as needed:
% \ifCLASSOPTIONpeerreview
% \begin{center} \bfseries EDICS Category: 3-BBND \end{center}
% \fi
%
% For peerreview papers, this IEEEtran command inserts a page break and
% creates the second title. It will be ignored for other modes.
\IEEEpeerreviewmaketitle

\section{Introduction}
\IEEEPARstart{H}{igh} data rates have always been one of the key requirements for wireless mobile communication systems. As the fifth generation (5G) of wireless systems is on the way to deployment, the explosive growth of mobile traffic data poses great challenges in the near future. It is predicted that individual user data rates will exceed 100 Gbps by 2030, and the overall mobile data traffic will reach 5 zettabytes per month \cite{Faisal2019study6G,Saad2019Vision6G,zhangj2019multiantenna,XiaoM2017survmmWave}. In order to meet these tremendous demands, the need for exploiting the high-frequency spectrum is consensus in academia and industry. With its abundant frequency resources, millimeter-wave (mmWave) communication can support gigabit or even terabit transmission rates, which makes it a promising technology for beyond 5G (B5G) and sixth generation (6G) networks \cite{Faisal2019study6G,Saad2019Vision6G,zhangj2019multiantenna,XiaoM2017survmmWave}. Due to the high propagation loss of mmWave signals, beamforming techniques have to be employed to achieve sufficiently high signal-to-noise ratios (SNRs) in mmWave communications \cite{zhangj2018LowADC,niu2015survey,zhangj2019MIMORelay,Gao2016hyb,xiao2018mmWaveNOMA}. Fortunately, benefiting from the small wavelength of mmWave signals, a large number of antennas can be equipped in a small area to realize high array gains \cite{Gao2016hyb,xiao2018mmWaveNOMA,Zhu2019hybNOMA}. Furthermore, the resulting highly directional mmWave beams improve transmission security by reducing the power of the signals received by eavesdroppers \cite{Huang2018HybriPrecod}. However, a drawback of mmWave communications is that obstacles on the ground may prevent the establishment of line-of-sight (LoS) links, which leads to severely attenuated received signal powers even if beamforming is applied. To address this issue, a novel heterogeneous multi-beam cloud radio access network and a decentralized algorithm for beam pair selection were proposed for seamless mmWave coverage in \cite{Liu2018BeamPair}.

On the other hand, unmanned aerial vehicle (UAV) communication has attracted significant attention during the past few years \cite{Zeng2019sky,zengUAVCom,Mozaffari2016UAV,Sun2019Traj,Yu2019frame}, and the integration of UAV into wireless communications is expected to play an important role in B5G and 6G \cite{Zeng2019sky,Li2019UAVSurv}. Benefiting from their mobility, UAVs can be flexibly deployed in areas without infrastructure coverage, e.g., deserts, oceans, and disaster areas where the terrestrial base stations (BSs) may be broken. Compared with conventional terrestrial BSs, UAVs operate at much higher altitudes, and typically have a high probability of being able to establish a line-of-sight (LoS) communication link with the ground user equipment (UE) \cite{Zeng2019sky,zengUAVCom,Zeng2018Traject,Wu2018Traject}. However, UAVs may also suffer from strong interference from neighboring infrastructures/equipments, including neighboring BSs, ground UEs, and other aircrafts. Thus, interference management is one of the key challenges in UAV communications.

To address these problems, the combination of mmWave communications and UAV communications is promising and has unique advantages \cite{xiao2016enabling,zhang2019mmWaveUAV,ZhangC2019mmUAVsur,Gapeyenko2018mmWaveUAV,Wang2019mmWUAVaccess,Zhao2018BeamTrack,Zhu2019UAVBF,Gao2015mmWaveUAV,ZhangW2018track}. First, due to the poor diffraction ability and high propagation loss of mmWave signals, the coverage range of mmWave networks is limited. Energy-efficient UAVs can be flexibly deployed and reconstituted to form a multi-hop network to enlarge the coverage range of mmWave communication networks. Second, at high UAV altitudes, the probability of an LoS link is high because shadowing of the air-to-ground link and the air-to-air link by buildings is unlikely to occur. This property is ideal for the highly directional mmWave signals, for which the non-LoS (NLoS) paths are highly attenuated \cite{xiao2016enabling,ZhangC2019mmUAVsur,Wang2019mmWUAVaccess,Zhu2019UAVBF}. Third, large numbers of antennas can be integrated in the small area available at UAVs because of the small wavelengths of mmWave signals. Hence, directional beamforming can be used to effectively enhance the power of the target signal and to suppress the interference at the UAV.

Motivated by these advantages, integrating UAVs into mmWave cellular has attracted considerable attention recently \cite{xiao2016enabling,zhang2019mmWaveUAV,ZhangC2019mmUAVsur,Gapeyenko2018mmWaveUAV,Wang2019mmWUAVaccess,Zhao2018BeamTrack,Zhu2019UAVBF,Gao2015mmWaveUAV,
ZhangW2018track,zhang2019tracking,zhong2019beampoint,Xu2018ResourceAllo}. In \cite{xiao2016enabling}, the potential of and approaches for combining UAV and mmWave communication were investigated, where fast beamforming training and tracking, spatial division multiple access, blockage, and user discovery were considered. In \cite{ZhangC2019mmUAVsur}, the channel characteristics and precoder design for mmWave-UAV systems were analyzed, and several general challenges and possible solutions were presented for mmWave-UAV cellular networks. The use of UAVs for dynamic routing in mmWave backhaul networks was proposed in \cite{Gapeyenko2018mmWaveUAV}, where the outage probability, spectral efficiency, and outage and non-outage duration distributions were analyzed. In \cite{Wang2019mmWUAVaccess}, multiple access schemes for mmWave-UAV communications were introduced, and a novel link-adaptive constellation-division multiple access technique was proposed. In \cite{Zhao2018BeamTrack}, a blind beam tracking approach was proposed for a UAV-satellite communication system employing a large-scale antenna array. In \cite{zhang2019tracking}, a beam tracking protocol for mmWave UAV-to-UAV communication was designed, where the position and altitude of the UAV were predicted via a Gaussian process based learning algorithm. Due to the unstable beam pointing in mmWave-UAV communications, an optimized beamforming scheme taking into account beam deviation was proposed to overcome beam misalignment in \cite{zhong2019beampoint}. In \cite{Xu2018ResourceAllo}, the two-dimensional position and the downlink beamformer of a fixed-altitude UAV were jointly optimized to mitigate the UAV jittering and user location uncertainty.

Different from the works above, in this paper, we propose to use a full-duplex UAV (FD-UAV) relay to facilitate mmWave communication. Specifically, an FD-UAV relay is deployed between a source node (SN) and a destination node (DN) to establish an LoS link, where large antenna arrays are employed for beamforming to enable directional beams facilitating high channel gains. Although physically separated antenna panels and directional antennas are usually used for mmWave transceivers, the small sidelobes of the radiation pattern, which are inevitable, may result in significant self-interference (SI) for FD relays \cite{Rajagopal2014SImiti,zhangj2018MIMOrelay,xiao2017mmWaveFD,zhang2019EEFD,Yang2018PerfAnalysis,Satyanarayana2019HBFD}. The authors of \cite{Rajagopal2014SImiti} have shown that, in addition to 70-80 dB physical isolation realized by increasing the distance between a transmitter (Tx) antenna panel and an adjacent receiver (Rx) antenna panel, 35-50 dB isolation via SI reduction\footnote{SI reduction methods for FD terminals are usually partitioned into three classes: propagation-domain, analog-circuit-domain, and digital-domain techniques. Tx and Rx beamforming at the FD-UAV relay can be categorized as propagation-domain and analog-circuit-domain approaches, respectively \cite{Sabharwal2014FDWireless,Liu2015FDR}.} is needed to enable successful reception of mmWave signals in in-band FD wireless backhaul links. This motivates us to investigate SI mitigation via mmWave beamforming. In \cite{zhang2019EEFD}, an orthogonal matching pursuit-based (OMP-based) SI-cancellation precoding algorithm was proposed to eliminate the SI and to improve the spectral efficiency in an FD relaying system. In \cite{Yang2018PerfAnalysis}, the impact of the beamwidth and the SI coefficient on the maximum achievable data rate was analyzed for a two-hop amplified-and-forward mmWave relaying system. However, the 3-dimensional (3-D) positioning of the UAV relay, which is investigated in this paper, has not been considered \cite{zhangj2018MIMOrelay,xiao2017mmWaveFD,Satyanarayana2019HBFD,zhang2019EEFD,Yang2018PerfAnalysis}. Besides, the placement, trajectory, resource allocation, and transceiver design of UAVs have also been widely investigated \cite{Zeng2018Traject,Wu2018Traject,Li2019UAVSurv,Alzenad2018UAV,Lyu2017UAV,Liu2019UAVtransceiver,Xu2018ResourceAllo,Sun2019Traj,Yu2019frame}. However, the effects of the mmWave channel and 3-D analog beamforming were not studied in these works. In the considered mmWave communication system, the position of the FD-UAV relay, the beamforming, and the power control have a significant impact on performance. Thus, these variables have to be carefully optimized. The main contributions of this paper can be summarized as follows.

\begin{enumerate}
  \item We propose to deploy an FD-UAV relay to improve the end-to-end performance of a mmWave communication system. We formulate a corresponding optimization problem for maximization of the achievable rate between the SN and the DN. Thereby, Tx and Rx beamforming are utilized to mitigate the SI at the FD-UAV relay. To the best of our knowledge, this is the first work which investigates the joint optimization of positioning, beamforming, and power control for mmWave FD-UAV relays.
  \item To handle the formulated non-convex optimization problem with high-dimensional, highly coupled variable vectors, we first assume an LoS environment and ideal beamforming, where the full array gains can be obtained for the SN-to-UAV (S2V) link and the UAV-to-DN (V2D) link, while the interference can be completely suppressed in the beamforming domain. Based on this assumption, we obtain the corresponding conditional optimal solution for the position of the FD-UAV relay in closed form. Then, we deploy the UAV to the position which is closest to the conditional optimal position and yields LoS paths for both the S2V and the V2D links.
  \item We propose an alternating interference suppression (AIS) algorithm for the joint design of the beamforming vectors (BFVs) and the power control variables. In each iteration, the beam gains for the target signals of the S2V and the V2D links are alternatingly maximized, while the interference is successively reduced. Meanwhile, the optimal power allocation to the SN and FD-UAV relay is updated in closed form for the given position and BFVs.
  \item Simulation results show that the proposed joint positioning, beamforming, and power control scheme outperforms three benchmark schemes. In fact, our results reveal that the proposed joint optimization method can closely approach a performance upper bound for mmWave FD-UAV relay systems.
\end{enumerate}

The rest of this paper is organized as follows. In Section II, we introduce the system model and formulate the proposed joint positioning, beamforming, and power control problem. In Section III, we provide our solution for the formulated problem. Simulation results are presented in Section IV, and the paper is concluded in Section V.

\textit{Notation}: $a$, $\mathbf{a}$, $\mathbf{A}$, and $\mathcal{A}$ denote a scalar, a vector, a matrix, and a set, respectively. $(\cdot)^{\rm{T}}$, $(\cdot)^{*}$, and $(\cdot)^{\rm{H}}$ denote transpose, conjugate, and conjugate transpose, respectively. $|a|$ and $\|\mathbf{a}\|$ denote the absolute value of $a$ and the Frobenius norm of $\mathbf{a}$, respectively. $\lceil a \rceil$ represents the minimum integer no smaller than real number $a$. $\mathbb{E}(\cdot)$ denotes the expected value of a random variable. $\mathfrak{R}(\cdot)$ and $\angle(\cdot)$ denote the real part and the phase of a complex number, respectively. $[\mathbf{a}]_i$ and $[\mathbf{A}]_{i,j}$ denote the $i$-th entry of vector $\mathbf{a}$ and the entry in the $i$-th row and $j$-th column of matrix $\mathbf{A}$, respectively.

\section{System Model and Problem Formulation}
We consider an end-to-end transmission scenario, where a SN serves a remote DN as shown in Fig. \ref{fig:system}\footnote{FD-UAV relays can be used to increase the end-to-end data rate between two ground nodes with poor link quality in B5G mmWave networks. Exemplary application scenarios include BS-to-UE communication, backhaul links \cite{Gapeyenko2018mmWaveUAV}, device-to-device communications \cite{wang2018FD-UAVshare}, and communication between two terrestrial mobile BSs in emergency situations \cite{cao2018airborne}.}. The SN and the DN are equipped with uniform planar arrays (UPAs) employing $N_{\mathrm{S}}^{\mathrm{tot}}=M_{\mathrm{S}} \times N_{\mathrm{S}}$ and $N_{\mathrm{D}}^{\mathrm{tot}}=M_{\mathrm{D}} \times N_{\mathrm{D}}$ antennas, respectively, to overcome the high path loss in the mmWave band. Due to obstacles such as ground buildings, the channel from the SN to the DN may be blocked. Thus, an FD-UAV relay, equipped with an $N_{\mathrm{t}}^{\mathrm{tot}}=M_{\mathrm{t}} \times N_{\mathrm{t}}$ Tx-UPA and an $N_{\mathrm{r}}^{\mathrm{tot}}=M_{\mathrm{r}} \times N_{\mathrm{r}}$ Rx-UPA, is deployed between the SN and the DN to improve system performance.

\begin{figure}[t]
\begin{center}
  \includegraphics[width=\figwidth cm]{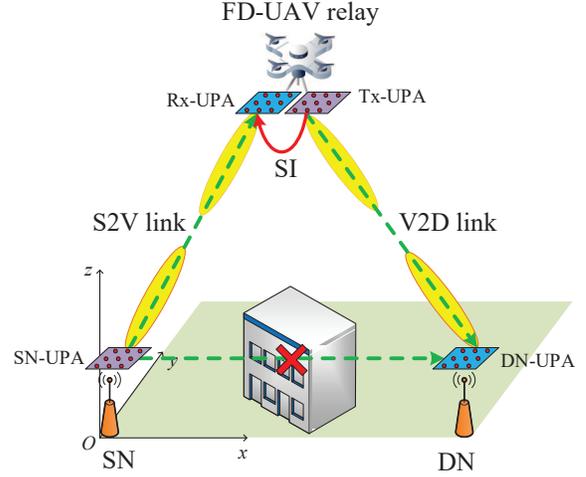}
  \caption{Illustration of the considered mmWave communication system employing an FD-UAV relay to overcome the blockage of the direct mmWave link between the SN and the DN by buildings.}
  \label{fig:system}
\end{center}
\end{figure}

\subsection{Signal Model}
In the considered system, the SN transmits signal $s_{1}$ to the UAV with power $P_{\mathrm{S}}$, and concurrently, the UAV transmits signal $s_{2}$ to the DN with power $P_{\mathrm{V}}$, where $\mathbb{E}(\left | s_{i} \right |^{2})=1$ for $i=1,2$. Thus, the received signal at the UAV is given by\footnote{We assume that a hovering rotary-wing UAV is deployed at a fixed position to support the communication between SN and DN. Thus, the Doppler effect is not considered in this paper.}
\begin{equation}\label{eq_signal_UAV}
\begin{aligned}
&\bar{y}_{\mathrm{V}}=\mathbf{w}_{\mathrm{r}}^{\rm{H}}\mathbf{H}_{\mathrm{S2V}}\mathbf{w}_{\mathrm{S}} \sqrt{P_{\mathrm{S}}}s_{1} + \mathbf{w}_{\mathrm{r}}^{\rm{H}}\mathbf{H}_{\mathrm{SI}}\mathbf{w}_{\mathrm{t}} \sqrt{P_{\mathrm{V}}}s_{2} + n_{1},
\end{aligned}
\end{equation}
where $\mathbf{H}_{\mathrm{S2V}} \in \mathbb{C}^{N_{\mathrm{r}}^{\mathrm{tot}}\times N_{\mathrm{S}}^{\mathrm{tot}}}$ is the channel matrix between the SN and the UAV. $\mathbf{H}_{\mathrm{SI}}\in \mathbb{C}^{N_{\mathrm{r}}^{\mathrm{tot}}\times N_{\mathrm{t}}^{\mathrm{tot}}}$ is the SI channel matrix between the Tx-UPA and the Rx-UPA at the FD-UAV relay. $n_{1}$ denotes the white Gaussian noise at the UAV having zero mean and power $\sigma_1^2$. $\mathbf{w}_{\mathrm{S}}\in \mathbb{C}^{N_{\mathrm{S}}^{\mathrm{tot}}\times 1}$, $\mathbf{w}_{\mathrm{r}}\in \mathbb{C}^{N_{\mathrm{r}}^{\mathrm{tot}}\times 1}$, and $\mathbf{w}_{\mathrm{t}}\in \mathbb{C}^{N_{\mathrm{t}}^{\mathrm{tot}}\times 1}$ represent the SN-BFV, the Rx-BFV at the UAV, and the Tx-BFV at the UAV, respectively.

The received signal at the DN is given by
\begin{equation} \label{eq_signal_DN}
\bar{y}_{\mathrm{D}}=\mathbf{w}_{\mathrm{D}}^{\rm{H}}\mathbf{H}_{\mathrm{S2D}}\mathbf{w}_{\mathrm{S}} \sqrt{P_{\mathrm{S}}}s_{1} + \mathbf{w}_{\mathrm{D}}^{\rm{H}}\mathbf{H}_{\mathrm{V2D}}\mathbf{w}_{\mathrm{t}} \sqrt{P_{\mathrm{V}}}s_{2} + n_{2},
\end{equation}
where $\mathbf{H}_{\mathrm{V2D}}\in \mathbb{C}^{N_{\mathrm{D}}^{\mathrm{tot}}\times N_{\mathrm{t}}^{\mathrm{tot}}}$ is the channel matrix between the UAV and the DN. $\mathbf{H}_{\mathrm{S2D}}\in \mathbb{C}^{N_{\mathrm{D}}^{\mathrm{tot}}\times N_{\mathrm{S}}^{\mathrm{tot}}}$ is the channel matrix between the SN and the DN. $\mathbf{w}_{\mathrm{D}}\in \mathbb{C}^{N_{\mathrm{D}}^{\mathrm{tot}}\times 1}$ denotes the DN-BFV. $n_{2}$ denotes the white Gaussian noise at the DN having zero mean and power $\sigma_2^2$.

In general, there are two main strategies for mmWave beamforming, i.e., digital beamforming and analog beamforming \cite{Gao2016hyb,xiao2018mmWaveNOMA,Zhu2019hybNOMA}. For digital beamforming, each antenna is connected to an independent radio frequency (RF) chain, and thus flexible beamforming is possible due to the large degrees of freedom (DoFs) of the digital beamforming matrices. However, for mmWave systems, the hardware cost and power consumption for digital beamforming are high. In contrast, analog beamforming is more energy efficient, as multiple antennas are connected to only one RF chain via phase shifters. In addition, for FD communication, analog-circuit-domain SI cancellation is usually performed before digital sampling to avoid saturation due to strong SI \cite{Sabharwal2014FDWireless,Liu2015FDR}. For these reasons, analog beamforming is adopted for the considered mmWave FD-UAV relay, which has limited battery capacity and may experience strong SI. The employed analog BFVs impose a constant-modulus (CM) constraint \cite{Gao2016hyb,xiao2018mmWaveNOMA,Zhu2019hybNOMA}, i.e.,

%\begin{align}
%&\left|\left[ \mathbf{w}_{\mathrm{S}}\right]_{n}\right|=\frac{1}{\sqrt{N_{\mathrm{S}}^{\mathrm{tot}}}}, ~ 1 \leq n \leq N_{\mathrm{S}}^{\mathrm{tot}}, \label{eq_CM_B}\\
%&\left|\left[ \mathbf{w}_{\mathrm{r}}\right]_{n}\right|=\frac{1}{\sqrt{N_{\mathrm{r}}^{\mathrm{tot}}}}, ~ 1 \leq n \leq N_{\mathrm{r}}^{\mathrm{tot}}, \label{eq_CM_r}\\
%&\left|\left[ \mathbf{w}_{\mathrm{t}}\right]_{n}\right|=\frac{1}{\sqrt{N_{\mathrm{t}}^{\mathrm{tot}}}}, ~ 1 \leq n \leq N_{\mathrm{t}}^{\mathrm{tot}}, \label{eq_CM_t}\\
%&\left|\left[ \mathbf{w}_{\mathrm{D}}\right]_{n}\right|=\frac{1}{\sqrt{N_{\mathrm{D}}^{\mathrm{tot}}}}, ~ 1 \leq n \leq N_{\mathrm{D}}^{\mathrm{tot}}. \label{eq_CM_U}
%\end{align}
\begin{equation}\label{eq_CM_B}
\left|\left[ \mathbf{w}_{\tau}\right]_{n}\right|=\frac{1}{\sqrt{N_{\tau}^{\mathrm{tot}}}}, ~ 1 \leq n \leq N_{\tau}^{\mathrm{tot}}, \tau=\left\{\mathrm{S}, \mathrm{r}, \mathrm{t}, \mathrm{D}\right\}.
\end{equation}

Then, we can obtain the achievable rates of the S2V and V2D links as follows
\begin{equation}\label{eq_Rate_S2V}
R_{\mathrm{S2V}}=\log_{2}\left (1+ \frac{\left | \mathbf{w}_{\mathrm{r}}^{\rm{H}}\mathbf{H}_{\mathrm{S2V}}\mathbf{w}_{\mathrm{S}} \right |^{2}P_{\mathrm{S}}}{\left | \mathbf{w}_{\mathrm{r}}^{\rm{H}}\mathbf{H}_{\mathrm{SI}}\mathbf{w}_{\mathrm{t}} \right |^{2}P_{\mathrm{V}}+\sigma_{1}^{2}}\right ),
\end{equation}
\begin{equation}\label{eq_Rate_V2D}
R_{\mathrm{V2D}}=\log_{2}\left (1+ \frac{\left | \mathbf{w}_{\mathrm{D}}^{\rm{H}}\mathbf{H}_{\mathrm{V2D}}\mathbf{w}_{\mathrm{t}} \right |^{2}P_{\mathrm{V}}}
{\left | \mathbf{w}_{\mathrm{D}}^{\rm{H}}\mathbf{H}_{\mathrm{S2D}}\mathbf{w}_{\mathrm{S}} \right |^{2}P_{\mathrm{S}}+\sigma_{2}^{2}}\right ).
\end{equation}
Since the S2D link has a small channel gain due to the assumed blockage, the signal received via the S2D link is treated as interference at DN. Note that the achievable rates in \eqref{eq_Rate_S2V} and \eqref{eq_Rate_V2D} hold for coherent detection. Therefore, the FD-UAV relay and DN need to know the effective channel gains $\mathbf{w}_{\mathrm{r}}^{\rm{H}}\mathbf{H}_{\mathrm{S2V}}\mathbf{w}_{\mathrm{S}}$ and $\mathbf{w}_{\mathrm{D}}^{\rm{H}}\mathbf{H}_{\mathrm{V2D}}\mathbf{w}_{\mathrm{t}}$, respectively. The achievable rate between the SN and the DN is the minimum of the rates of the S2V and V2D links, i.e.,
\begin{equation}\label{eq_Rate_S2D}
R_{\mathrm{S2D}}=\min\{R_{\mathrm{S2V}}, R_{\mathrm{V2D}}\}.
\end{equation}

\subsection{Channel Model}
Due to the directivity and sparsity of the far-field mmWave-channel, the channel matrices of the S2V and V2D links can be expressed as a superposition of multipath components, where different paths have different angles of departure (AoDs) and angles of arrival (AoAs). Hence, the channel matrices of the S2V, V2D, and SN-to-DN (S2D) links are modeled as follows \cite{xiao2016enabling,Zhao2018BeamTrack,Zhu2019UAVBF,Gao2016hyb,xiao2018mmWaveNOMA,Zhu2019hybNOMA}
\begin{equation} \label{Channel_S2V}
\begin{aligned}
\mathbf{H}_{\mathrm{S2V}} &= \chi_{\mathrm{S2V}} \beta_{\mathrm{S2V}}^{(0)}
\mathbf{a}_{\mathrm{r}}(\theta_{\mathrm{r}}^{(0)},\phi_{\mathrm{r}}^{(0)})\mathbf{a}_{\mathrm{S}}^{\mathrm{H}}(\theta_{\mathrm{S}}^{(0)},\phi_{\mathrm{S}}^{(0)}) \\ &~~+ \sum \limits_{\ell=1}^{L_{\mathrm{S2V}}} \beta_{\mathrm{S2V}}^{(\ell)}
\mathbf{a}_{\mathrm{r}}(\theta_{\mathrm{r}}^{(\ell)},\phi_{\mathrm{r}}^{(\ell)})\mathbf{a}_{\mathrm{S}}^{\mathrm{H}}(\theta_{\mathrm{S}}^{(\ell)},\phi_{\mathrm{S}}^{(\ell)}),
\end{aligned}
\end{equation}
\begin{equation} \label{Channel_V2D}
\begin{aligned}
\mathbf{H}_{\mathrm{V2D}} &= \chi_{\mathrm{V2D}} \beta_{\mathrm{V2D}}^{(0)}
\mathbf{a}_{\mathrm{D}}(\theta_{\mathrm{D}}^{(0)},\phi_{\mathrm{D}}^{(0)})\mathbf{a}_{\mathrm{t}}^{\mathrm{H}}(\theta_{\mathrm{t}}^{(0)},\phi_{\mathrm{t}}^{(0)}) \\ &~~+ \sum \limits_{\ell=1}^{L_{\mathrm{V2D}}} \beta_{\mathrm{V2D}}^{(\ell)}
\mathbf{a}_{\mathrm{D}}(\theta_{\mathrm{D}}^{(\ell)},\phi_{\mathrm{D}}^{(\ell)})\mathbf{a}_{\mathrm{t}}^{\mathrm{H}}(\theta_{\mathrm{t}}^{(\ell)},\phi_{\mathrm{t}}^{(\ell)}),
\end{aligned}
\end{equation}
\begin{equation} \label{Channel_S2D}
\begin{aligned}
\mathbf{H}_{\mathrm{S2D}}= \sum \limits_{\ell=1}^{L_{\mathrm{S2D}}} \beta_{\mathrm{S2D}}^{(\ell)}
\mathbf{a}_{\mathrm{D}}(\theta_{\mathrm{\widetilde{D}}}^{(\ell)},\phi_{\mathrm{\widetilde{D}}}^{(\ell)})
\mathbf{a}_{\mathrm{S}}^{\mathrm{H}}(\theta_{\mathrm{\widetilde{S}}}^{(\ell)},\phi_{\mathrm{\widetilde{S}}}^{(\ell)}),
\end{aligned}
\end{equation}
where index $\ell=0$ represents the LoS component and indices $\ell \geq 1$ represent the NLoS components. $L_{\mathrm{S2V}}$, $L_{\mathrm{V2D}}$, and $L_{\mathrm{S2D}}$ are the total number of NLoS components for the S2V, V2D, and S2D channels, respectively. Random variables $\chi_{\mathrm{S2V}}$ and $\chi_{\mathrm{V2D}}$ are equal to 1 if the LoS path exists and equal to 0 otherwise. Furthermore, the LoS path from the SN to the DN is assumed to be blocked, which is the main motivation for deploying an FD-UAV relay. $\beta_{\mathrm{S2V}}^{(\ell)}$, $\beta_{\mathrm{V2D}}^{(\ell)}$, and $\beta_{\mathrm{S2D}}^{(\ell)}$ are the complex coefficients of the S2V, V2D, and S2D paths, respectively. $\theta_{\mathrm{S}}^{(\ell)}$, $\phi_{\mathrm{S}}^{(\ell)}$, $\theta_{\mathrm{r}}^{(\ell)}$, and $\phi_{\mathrm{r}}^{(\ell)}$ represent the elevation AoD (E-AoD), azimuth AoD (A-AoD), elevation AoA (E-AoA), and azimuth AoA (A-AoA) of the S2V path, respectively. $\theta_{\mathrm{t}}^{(\ell)}$, $\phi_{\mathrm{t}}^{(\ell)}$, $\theta_{\mathrm{D}}^{(\ell)}$, and $\phi_{\mathrm{D}}^{(\ell)}$ represent the E-AoD, A-AoD, E-AoA, and A-AoA of the V2D path, respectively. $\theta_{\mathrm{\widetilde{B}}}^{(\ell)}$, $\phi_{\mathrm{\widetilde{B}}}^{(\ell)}$, $\theta_{\mathrm{\widetilde{U}}}^{(\ell)}$, and $\phi_{\mathrm{\widetilde{U}}}^{(\ell)}$ represent the E-AoD, A-AoD, E-AoA, and A-AoA of the S2D path, respectively. $\mathbf{a}_{\mathrm{S}}(\cdot)$, $\mathbf{a}_{\mathrm{r}}(\cdot)$, $\mathbf{a}_{\mathrm{t}}(\cdot)$, and $\mathbf{a}_{\mathrm{D}}(\cdot)$ are the steering vectors of the UPA at the SN, the Rx-UPA at the FD-UAV relay, the Tx-UPA at the FD-UAV relay, and the UPA at the DN, respectively. The steering vectors are given as follows \cite{balanis2016antenna}
\begin{equation} \label{eq_steeringVCT}
\begin{aligned}
&\mathbf{a}_{\tau}(\theta_{\tau},\phi_{\tau})=
[1,\cdots,e^{j2\pi\frac{d}{\lambda}\cos\theta_{\tau}[(m-1)\cos \phi_{\tau}+(n-1)\sin \phi_{\tau}]},\\
&~~~~\cdots,e^{j2\pi\frac{d}{\lambda}\cos\theta_{\tau}[(M_{\tau}^{\mathrm{tot}}-1)\cos \phi_{\tau}+(N_{\tau}^{\mathrm{tot}}-1)\sin \phi_{\tau}]} ]^{\mathrm{T}},
\end{aligned}
\end{equation}
where $d$ is the spacing between adjacent antennas, $\lambda$ is the carrier wavelength, $0 \leq m \leq M_{\tau}^{\mathrm{tot}}-1$, $0 \leq n \leq N_{\tau}^{\mathrm{tot}}-1$, and $\tau=\left\{\mathrm{S}, \mathrm{r}, \mathrm{t}, \mathrm{D}\right\}$. Particularly, for half-wavelength spacing arrays, we have $d=\lambda/2$.

For the LoS path of the SI channel at the FD-UAV relay, the far-field range condition, $R \geq 2D^2/\lambda$, where $R$ is the distance between the Tx antenna and the Rx antenna and $D$ is the diameter of the antenna aperture, does not hold in general. Thus, the SI channel has to be modeled using the near-field model as follows \cite{xiao2017mmWaveFD,Satyanarayana2019HBFD,zhang2019EEFD}
\begin{equation}\label{Channel_SI}
\left[\mathbf{H}_{\mathrm{SI}}\right]_{m, n}=\beta_{\mathrm{SI}}^{(m,n)} \exp \left(-j 2 \pi \frac{r_{m, n}}{\lambda}\right),
\end{equation}
where $\beta_{\mathrm{SI}}^{(m,n)}$ are the complex coefficients of the SI channel, and $r_{m,n}$ is the distance between the $m$-th Tx array element and the $n$-th Rx array element. Note that for the SI channel, NLoS paths may also exist, due to reflectors around the FD-UAV relay. Since the propagation distances of the NLoS paths are much longer than that of the LoS path, which leads to a higher attenuation, we focus on the LoS component of the SI channel \cite{xiao2017mmWaveFD,Satyanarayana2019HBFD,zhang2019EEFD}. Although the SI channel model is more complicated compared to the far-field channel model, the FD-UAV relay is expected to be able to acquire the corresponding channel state information (CSI), as the SI channel is only slowly varying \cite{xiao2017mmWaveFD}. In this paper, we assume that for a given fixed position of the FD-UAV relay, instantaneous CSI is available at the SN, FD-UAV relay, and DN via channel estimation. However, the FD-UAV can acquire only the CSI for the position it is at.

Next, we provide the models for the parameters of the channel matrices in \eqref{Channel_S2V}-\eqref{Channel_S2D}, \eqref{Channel_SI}. As shown in Fig. \ref{fig:system}, we establish a coordinate system with the origin at the SN, and the three axes $x$, $y$, and $z$, are separately aligned with the directions of east, north, and vertical (upward), respectively. Without loss of generality, we assume the SN and the DN both have zero altitude, and the UPAs are parallel to the plane spanned by the $x$ and $y$ axes. Then, the coordinates of the DN are $(x_{\mathrm{D}},y_{\mathrm{D}},0)$, and the coordinates of the FD-UAV relay are $(x_{\mathrm{V}},y_{\mathrm{V}},h_{\mathrm{V}})$.

According to basic geometry, we obtain the parameters of the S2V link, including the distance and the AoDs and AoAs of the LoS path, as follows
\begin{equation} \label{anlge_S2V}
\left \{
\begin{aligned}
&{d_{\mathrm{S} 2 \mathrm{V}}=\sqrt{x_{\mathrm{V}}^{2}+y_{\mathrm{V}}^{2}+h_{\mathrm{V}}^{2}}}, \\
&{\theta_{\mathrm{S}}^{(0)}=\theta_{\mathrm{r}}^{(0)}=\arctan \frac{h_{\mathrm{V}}}{\sqrt{x_{\mathrm{V}}^{2}+y_{\mathrm{V}}^{2}}}}, \\
&{\phi_{\mathrm{S}}^{(0)}=\phi_{\mathrm{r}}^{(0)}=\arctan \frac{y_{\mathrm{V}}}{x_{\mathrm{V}}}}.
\end{aligned}
\right.
\end{equation}
Similarly, we obtain the parameters of the V2D link as
\begin{equation} \label{anlge_V2D}
\left \{
\begin{aligned}
&{d_{\mathrm{V} 2 \mathrm{D}}=\sqrt{\left(x_{\mathrm{V}}-x_{\mathrm{D}}\right)^{2}+\left(y_{\mathrm{V}}-y_{\mathrm{D}}\right)^{2}+h_{\mathrm{V}}^{2}}}, \\
&{\theta_{\mathrm{t}}^{(0)}=\theta_{\mathrm{D}}^{(0)}=\arctan \frac{h_{\mathrm{V}}}{\sqrt{\left(x_{\mathrm{V}}-x_{\mathrm{D}}\right)^{2}+\left(y_{\mathrm{V}}-y_{\mathrm{D}}\right)^{2}}}}, \\
&{\phi_{\mathrm{t}}^{(0)}=\phi_{\mathrm{D}}^{(0)}=\arctan \frac{y_{\mathrm{V}}-y_{\mathrm{D}}}{x_{\mathrm{V}}-x_{\mathrm{D}}}}.
\end{aligned}
\right.
\end{equation}
For the S2V, V2D, and S2D links, which are characterized by far-field channels, the AoDs and AoAs of the NLoS paths are assumed to be uniformly distributed. Considering the propagation conditions at mmWave frequencies, the complex coefficients of the LoS and NLoS paths are modeled as \cite{Rappaport2015meas}
\begin{equation} \label{coef_LoS}
\beta_{\mathrm{S2V}}^{(0)}=\frac{c}{4\pi f_{c}}d_{\mathrm{S2V}}^{-\alpha_{\mathrm{LoS}}/2}, \beta_{\mathrm{V2D}}^{(0)}=\frac{c}{4\pi f_{c}}d_{\mathrm{V2D}}^{-\alpha_{\mathrm{LoS}}/2},
\end{equation}
\begin{equation} \label{coef_NLoS}
\left \{
\begin{aligned}
\beta_{\mathrm{S2V}}^{(\ell)}=\frac{c}{4\pi f_{c}}d_{\mathrm{S2V}}^{-\alpha_{\mathrm{NLoS}}/2}X_{1}, ~~\text{for}~ \ell\geq 1,\\ 
\beta_{\mathrm{V2D}}^{(\ell)}=\frac{c}{4\pi f_{c}}d_{\mathrm{V2D}}^{-\alpha_{\mathrm{NLoS}}/2}X_{2}, ~~\text{for}~ \ell\geq 1,\\ 
\beta_{\mathrm{S2D}}^{(\ell)}=\frac{c}{4\pi f_{c}}d_{\mathrm{S2D}}^{-\alpha_{\mathrm{NLoS}}/2}X_{3}, ~~\text{for}~ \ell\geq 1,
\end{aligned}
\right.
\end{equation}
where $c$ is the constant speed of light, $f_{c}$ is the carrier frequency, and $d_{\mathrm{S2D}}=\sqrt{x_{\mathrm{D}}^{2}+y_{\mathrm{D}}^{2}}$ is the distance of the S2D link. $\alpha_{\mathrm{LoS}}$ and $\alpha_{\mathrm{NLoS}}$ are the large-scale path loss exponents for the LoS and NLoS links, respectively. $X_{i}$, $i=1,2,3$, are the gains for the NLoS paths, which are assumed to be circular symmetric complex Gaussian random variables with zero mean and standard deviation $\sigma_{f}$, i.e., Rayleigh fading is assumed \cite{TseFundaWC}. For the SI channel, the complex coefficient is given by \cite{xiao2017mmWaveFD,Satyanarayana2019HBFD,zhang2019EEFD}
\begin{equation} \label{coef_SI}
\beta_{\mathrm{SI}}^{(m,n)}=\frac{c}{4\pi f_{c}}r_{m,n}^{-\alpha_{\mathrm{LoS}}/2}.
\end{equation}

Besides, due to obstacles on the ground, the probabilities that an LoS path exists for the S2V and V2D links are modelled as logistic functions of the elevation angles \cite{Hourani2014MaxCov}, i.e.,
\begin{equation} \label{prob_LoS_S2V}
\hat{P}_{\mathrm{S2V}}^{\mathrm{LoS}}=\frac{1}{1+a\exp{(-b(\frac{180}{\pi}\theta_{\mathrm{r}}^{(0)}-a)})},
\end{equation}
\begin{equation} \label{prob_LoS_V2D}
\hat{P}_{\mathrm{V2D}}^{\mathrm{LoS}}=\frac{1}{1+a\exp{(-b(\frac{180}{\pi}\theta_{\mathrm{t}}^{(0)}-a)})},
\end{equation}
where $a$ and $b$ are positive modelling parameters whose values depend on the propagation environment. Random variables $\chi_{\mathrm{S2V}}$ and $\chi_{\mathrm{V2D}}$ in \eqref{Channel_S2V} and \eqref{Channel_V2D} are generated based on the LoS probabilities in \eqref{prob_LoS_S2V} and \eqref{prob_LoS_V2D}, respectively. Hereto, the statistical channel models for S2V, V2D, and S2D links have been provided. For the communication scenario considered in this paper, the instantaneous channel responses are generated according to these statistical models.

From the above, we observe that the S2V and V2D channels, including the propagation loss, the spatial angles, and the probabilities that an LoS link exists, depend on the position of the UAV. Thus, the position of the FD-UAV relay has significant influence on the achievable data rate. However, in practice, the instantaneous CSI is not a priori known by the SN, UAV, and DN before the UAV is deployed at a given fixed position and performs channel estimation. This property distinguishes the considered FD-UAV relay system from traditional FD relay networks on the ground where the position of the relay is fixed.

\subsection{Problem Formulation}
To maximize the achievable rate from the SN to the DN, we formulate the following problem for joint optimization of the UAV positioning, BFVs, and transmit powers:
\begin{equation}\label{eq_problem}
\begin{aligned}
\mathop{\mathrm{Maximize}}\limits_{\Psi}~~ &\min \left\{R_{\mathrm{S2V}}, R_{\mathrm{V2D}}\right\}\\
\mathrm{Subject~ to}~~ &\left(x_{\mathrm{V}}, y_{\mathrm{V}}\right) \in\left[0, x_{\mathrm{D}}\right] \times\left[ 0, y_{\mathrm{D}} \right], \\
&h_{\min} \leq h_{\mathrm{V}} \leq h_{\max},\\
&0 \leq P_{\mathrm{S}} \leq P_{\mathrm{S}}^{\mathrm{tot}}, \\
&0 \leq P_{\mathrm{V}} \leq P_{\mathrm{V}}^{\mathrm{tot}}, \\
&\left|\left[ \mathbf{w}_{\tau}\right]_{n}\right|=\frac{1}{\sqrt{N_{\tau}^{\mathrm{tot}}}}, ~ \tau=\left\{\mathrm{S}, \mathrm{r}, \mathrm{t}, \mathrm{D}\right\}, ~\forall n,
\end{aligned}
\end{equation}
where $\Psi=\{x_{\mathrm{V}}, y_{\mathrm{V}}, h_{\mathrm{V}}, \mathbf{w}_{\mathrm{S}}, \mathbf{w}_{\mathrm{D}}, \mathbf{w}_{\mathrm{r}}, \mathbf{w}_{\mathrm{t}}, P_{\mathrm{S}}, P_{\mathrm{V}}\}$. The first constraint indicates that the FD-UAV relay should be deployed between the SN and the DN. The second constraint limits the altitude of the FD-UAV relay, where $h_{\min}$ and $h_{\max}$ are the minimum and maximum values, respectively. The third and fourth constraints indicate that the transmit powers are nonnegative and cannot exceed a maximum value, where $P_{\mathrm{S}}^{\mathrm{tot}}$ and $P_{\mathrm{V}}^{\mathrm{tot}}$ are the maximum transmit powers of the SN and the FD-UAV relay, respectively. The fifth constraint is the CM constraint on the analog BFVs. Due to the non-convex nature and high-dimensional, highly coupled variable vectors, Problem \eqref{eq_problem} cannot be directly solved with existing optimization tools. Thus, we develop a solution for \eqref{eq_problem} in the next section.

\section{Solution of the Problem}
Since in Problem \eqref{eq_problem} the position variables, BFVs, and power control variables are highly coupled, it is difficult to obtain a globally optimal solution. In this section, we develop a sub-optimal solution for Problem \eqref{eq_problem}. Since the position of the FD-UAV relay crucially affects the S2V and V2D channel matrices, we first optimize $x_{\mathrm{V}}$, $y_{\mathrm{V}}$, and $h_{\mathrm{V}}$. Then, given the position of the FD-UAV relay and the corresponding instantaneous CSI, we develop the proposed AIS algorithm for joint optimization of the BFVs and the power control variables. Finally, we summarize the proposed overall solution for joint positioning, beamforming, and power control in mmWave FD-UAV relay systems.
\subsection{Positioning Under Ideal Beamforming}
Since the LoS path is much stronger than the NLoS paths at mmWave frequencies in general, we neglect the NLoS paths for optimization of the position of the FD-UAV relay in this subsection. Furthermore, the motivation for deploying an FD-UAV relay is to establish LoS communication links for both the S2V and the V2D links, otherwise the communication quality will be poor. Thus, we assume that both the S2V and the V2D links have an LoS path\footnote{For a sufficiently large $h_{\min}$, the probabilities that LoS paths exist, given by \eqref{prob_LoS_S2V} and \eqref{prob_LoS_V2D}, approach 1 \cite{Zeng2019sky}, and thus the LoS-environment assumption adopted for positioning is reasonable. If an LoS path does not exist for the S2V and/or the V2D links at the optimized position, we resort to the strategy specified after Theorem 1.}, and optimize the position of the FD-UAV relay under the assumption of ideal beamforming.

\begin{definition} \label{Defi_idealBF} \textbf{(Ideal Beamforming)}
For ideal BFVs $\mathbf{w}_{\tau}$, $\tau=\left\{\mathrm{S}, \mathrm{r}, \mathrm{t}, \mathrm{D}\right\}$, assuming an LoS environment, the FD-UAV relay system achieves the full array gains for the S2V and V2D links, respectively, while the SI and the interference caused by the S2D link are completely eliminated in the beamforming domain, i.e.,
\begin{equation}\label{eq_idealBF}
\left\{
\begin{aligned}
&\left | \mathbf{w}_{\mathrm{r}}^{\mathrm{H}}\mathbf{H}_{\mathrm{S2V}}\mathbf{w}_{\mathrm{S}} \right |^{2}=\left|\beta_{\mathrm{S2V}}^{(0)}\right|^{2} N_{\mathrm{S}}^{\mathrm{tot}} N_{\mathrm{r}}^{\mathrm{tot}},\\
&\left | \mathbf{w}_{\mathrm{D}}^{\mathrm{H}}\mathbf{H}_{\mathrm{V2D}}\mathbf{w}_{\mathrm{t}} \right |^{2}=\left|\beta_{\mathrm{V2D}}^{(0)}\right|^{2} N_{\mathrm{t}}^{\mathrm{tot}} N_{\mathrm{D}}^{\mathrm{tot}},\\
&\left | \mathbf{w}_{\mathrm{r}}^{\mathrm{H}}\mathbf{H}_{\mathrm{SI}}\mathbf{w}_{\mathrm{t}} \right |^{2}=\left | \mathbf{w}_{\mathrm{D}}^{\rm{H}}\mathbf{H}_{\mathrm{S2D}}\mathbf{w}_{\mathrm{S}} \right |^{2}=0.
\end{aligned}
\right.
\end{equation}
\end{definition}
Substituting \eqref{coef_LoS} and \eqref{eq_idealBF} into \eqref{eq_Rate_S2V} and \eqref{eq_Rate_V2D}, for a pure LoS environment, we obtain upper bounds for the achievable rates of the S2V and V2D links as follows
\begin{equation}\label{eq_RateBound_S2V}
\bar{R}_{\mathrm{S2V}}=\log _{2}\left(1+\frac{c^{2}}{16\pi^{2} f_{c}^{2}} \frac{N_{\mathrm{S}}^{\mathrm{tot}} N_{\mathrm{r}}^{\mathrm{tot}} P_{\mathrm{S}}^{\mathrm{tot}}}{d_{\mathrm{S2V}}^{\alpha_{\mathrm{LoS}}} \sigma_{1}^{2}}\right),
\end{equation}
\begin{equation}\label{eq_RateBound_V2D}
\bar{R}_{\mathrm{V2D}}=\log _{2}\left(1+\frac{c^{2}}{16\pi^{2} f_{c}^{2}} \frac{N_{\mathrm{t}}^{\mathrm{tot}} N_{\mathrm{D}}^{\mathrm{tot}} P_{\mathrm{V}}^{\mathrm{tot}}}{d_{\mathrm{V2D}}^{\alpha_{\mathrm{LoS}}} \sigma_{2}^{2}}\right).
\end{equation}
Note that the upper bounds given by \eqref{eq_RateBound_S2V} and \eqref{eq_RateBound_V2D} are valid for a pure LoS environment without NLoS paths. When the NLoS paths are also considered, we obtain upper bounds for the achievable rates of the S2V and V2D links as follows
\begin{equation}\label{eq_RateBound2_S2V}
\bar{\bar{R}}_{\mathrm{S2V}}=\log _{2}\left(1+\sum \limits_{\ell=0}^{L_{\mathrm{S2V}}} \left|\beta_{\mathrm{S2V}}^{(\ell)}\right|^{2} \frac{N_{\mathrm{S}}^{\mathrm{tot}} N_{\mathrm{r}}^{\mathrm{tot}} P_{\mathrm{S}}^{\mathrm{tot}}}{\sigma_{1}^{2}}\right),
\end{equation}
\begin{equation}\label{eq_RateBound2_V2D}
\bar{\bar{R}}_{\mathrm{V2D}}=\log _{2}\left(1+\sum \limits_{\ell=0}^{L_{\mathrm{V2D}}} \left|\beta_{\mathrm{V2D}}^{(\ell)}\right|^{2} \frac{N_{\mathrm{t}}^{\mathrm{tot}} N_{\mathrm{D}}^{\mathrm{tot}} P_{\mathrm{V}}^{\mathrm{tot}}}{\sigma_{2}^{2}}\right).
\end{equation}
We refer to the achievable rates in \eqref{eq_RateBound_S2V} and \eqref{eq_RateBound_V2D} as \emph{approximate upper bounds}, and to the achievable rates in \eqref{eq_RateBound2_S2V} and \eqref{eq_RateBound2_V2D} as \emph{strict upper bounds}. Since the NLoS paths are not a priori known for different positions of the FD-UAV relay, the approximate upper bounds are used for UAV positioning. The performance gap between the approximate upper bounds and the strict upper bounds will be evaluated via simulations in Section IV.

As can be seen, for an LoS environment and ideal beamforming, the achievable rates in \eqref{eq_RateBound_S2V} and \eqref{eq_RateBound_V2D} depend only on the distances $d_{\mathrm{S2V}}$, $d_{\mathrm{V2D}}$, and the transmit powers $P_{\mathrm{S}}$, $P_{\mathrm{V}}$. Note that the achievable rates are both monotonically increasing in the transmit power. Hence, $P_{\mathrm{S}}^{\mathrm{tot}}$ and $P_{\mathrm{V}}^{\mathrm{tot}}$ are the optimal transmit powers maximizing the upper-bound rate for an LoS environment and ideal beamforming. In the following theorem, we provide the corresponding optimal position of the FD-UAV relay.
\begin{theorem} \label{Theo_posi}
For an LoS environment and ideal beamforming, the optimal solution for the UAV's position is given by $\left(x_{\mathrm{V}}^{\star},y_{\mathrm{V}}^{\star},h_{\mathrm{V}}^{\star}\right) = \left(\rho^{\star}x_{\mathrm{D}},\rho^{\star}y_{\mathrm{D}},h_{\min}\right)$ with
\begin{equation} \label{OptPosition}
\rho^{\star}=\left \{
\begin{aligned}
&0,~\text{if}~\frac{N_{\mathrm{S}}^{\mathrm{tot}} N_{\mathrm{r}}^{\mathrm{tot}} P_{\mathrm{S}}^{\mathrm{tot}}\sigma_{2}^{2}}{N_{\mathrm{t}}^{\mathrm{tot}} N_{\mathrm{D}}^{\mathrm{tot}} P_{\mathrm{V}}^{\mathrm{tot}}\sigma_{1}^{2}} \leq \frac{h_{\min}^{\alpha_{\mathrm{LoS}}}}{\left(x_{\mathrm{D}}^{2}+y_{\mathrm{D}}^{2}+h_{\min}^{2}\right)^{\frac{\alpha_{\mathrm{LoS}}}{2}}}, \\
&1,~\text{if}~\frac{N_{\mathrm{S}}^{\mathrm{tot}} N_{\mathrm{r}}^{\mathrm{tot}} P_{\mathrm{S}}^{\mathrm{tot}}\sigma_{2}^{2}}{N_{\mathrm{t}}^{\mathrm{tot}} N_{\mathrm{D}}^{\mathrm{tot}} P_{\mathrm{V}}^{\mathrm{tot}}\sigma_{1}^{2}} \geq \frac{\left(x_{\mathrm{D}}^{2}+y_{\mathrm{D}}^{2}+h_{\min}^{2}\right)^{\frac{\alpha_{\mathrm{LoS}}}{2}}}{h_{\min}^{\alpha_{\mathrm{LoS}}}}, \\
&\frac{1}{2},~\text{if}~\frac{N_{\mathrm{S}}^{\mathrm{tot}} N_{\mathrm{r}}^{\mathrm{tot}} P_{\mathrm{S}}^{\mathrm{tot}}\sigma_{2}^{2}}{N_{\mathrm{t}}^{\mathrm{tot}} N_{\mathrm{D}}^{\mathrm{tot}} P_{\mathrm{V}}^{\mathrm{tot}}\sigma_{1}^{2}}=1,\\
&\frac{-b'-\sqrt{b'^2-4a'c'}}{2a'},~\text{otherwise},
\end{aligned}
\right.
\end{equation}
where parameters $a'$, $b'$, and $c'$ are given by
\begin{equation}
\left \{
\begin{aligned}
&a'=\left(\left(\frac{N_{\mathrm{S}}^{\mathrm{tot}} N_{\mathrm{r}}^{\mathrm{tot}} P_{\mathrm{S}}^{\mathrm{tot}}}{\sigma_{1}^{2}}\right)^{\frac{2}{\alpha_{\mathrm{LoS}}}}-\left(\frac{N_{\mathrm{t}}^{\mathrm{tot}} N_{\mathrm{D}}^{\mathrm{tot}} P_{\mathrm{V}}^{\mathrm{tot}}}{\sigma_{2}^{2}}\right)^{\frac{2}{\alpha_{\mathrm{LoS}}}}\right)\\
&~~~~~~~~~\times \left(x_{\mathrm{D}}^{2}+y_{\mathrm{D}}^{2}\right),\\
&b'=-2\left(\frac{N_{\mathrm{S}}^{\mathrm{tot}} N_{\mathrm{r}}^{\mathrm{tot}} P_{\mathrm{S}}^{\mathrm{tot}}}{\sigma_{1}^{2}}\right)^{\frac{2}{\alpha_{\mathrm{LoS}}}}\left(x_{\mathrm{D}}^{2}+y_{\mathrm{D}}^{2}\right),\\
&c'=\left(\frac{N_{\mathrm{S}}^{\mathrm{tot}} N_{\mathrm{r}}^{\mathrm{tot}} P_{\mathrm{S}}^{\mathrm{tot}}}{\sigma_{1}^{2}}\right)^{\frac{2}{\alpha_{\mathrm{LoS}}}}\left(x_{\mathrm{D}}^{2}+y_{\mathrm{D}}^{2}\right)+\\
&\left(\left(\frac{N_{\mathrm{S}}^{\mathrm{tot}} N_{\mathrm{r}}^{\mathrm{tot}} P_{\mathrm{S}}^{\mathrm{tot}}}{\sigma_{1}^{2}}\right)^{\frac{2}{\alpha_{\mathrm{LoS}}}}-\left(\frac{N_{\mathrm{t}}^{\mathrm{tot}} N_{\mathrm{D}}^{\mathrm{tot}} P_{\mathrm{V}}^{\mathrm{tot}}}{\sigma_{2}^{2}}\right)^{\frac{2}{\alpha_{\mathrm{LoS}}}}\right) h_{\min}^{2}.
\end{aligned}
\right.
\end{equation}
\end{theorem}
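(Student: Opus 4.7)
My plan is to reduce the joint optimization over $(x_V, y_V, h_V)$ to a scalar max-min problem on $[0,1]$ and then resolve it by a case split on endpoint values versus an interior zero of the difference of the two rates. First, observe that both $\bar{R}_{\mathrm{S2V}}$ and $\bar{R}_{\mathrm{V2D}}$ are strictly decreasing in $d_{\mathrm{S2V}}$ and $d_{\mathrm{V2D}}$, respectively, and each of these distances is strictly increasing in $h_V$. Hence $\min\{\bar{R}_{\mathrm{S2V}}, \bar{R}_{\mathrm{V2D}}\}$ is monotonically decreasing in $h_V$, which forces $h_V^\star = h_{\min}$.

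Next, I decompose the horizontal position as $(x_V, y_V) = \rho (x_D, y_D) + \mathbf{u}^\perp$, where $\mathbf{u}^\perp$ is perpendicular to $(x_D, y_D)$. A direct expansion shows that both $d_{\mathrm{S2V}}^2$ and $d_{\mathrm{V2D}}^2$ carry a common additive $\|\mathbf{u}^\perp\|^2$ term, so both distances strictly increase with the perpendicular offset. The optimum therefore satisfies $\mathbf{u}^\perp = 0$ with $\rho \in [0,1]$: the UAV lies on the horizontal segment joining the projections of the SN and DN. Under this reduction $\bar{R}_{\mathrm{S2V}}$ is strictly decreasing in $\rho$ while $\bar{R}_{\mathrm{V2D}}$ is strictly increasing.

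The max-min of a decreasing and an increasing function on $[0,1]$ falls into three cases: (i) if $\bar{R}_{\mathrm{S2V}}(0) \leq \bar{R}_{\mathrm{V2D}}(0)$, the pointwise minimum equals $\bar{R}_{\mathrm{S2V}}$ and is maximized at $\rho^\star = 0$; (ii) if $\bar{R}_{\mathrm{S2V}}(1) \geq \bar{R}_{\mathrm{V2D}}(1)$, the pointwise minimum equals $\bar{R}_{\mathrm{V2D}}$ and is maximized at $\rho^\star = 1$; (iii) otherwise the continuous difference $\bar{R}_{\mathrm{S2V}}(\rho) - \bar{R}_{\mathrm{V2D}}(\rho)$ changes sign on $[0,1]$ and the optimum is attained at its unique interior zero. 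Plugging in the endpoint distances $d_{\mathrm{S2V}}(0) = h_{\min}$ and $d_{\mathrm{V2D}}(0) = \sqrt{L + h_{\min}^2}$ with $L := x_D^2 + y_D^2$ (and the symmetric values at $\rho = 1$) transcribes (i) and (ii) directly into the explicit inequalities on $\kappa := N_{\mathrm{S}}^{\mathrm{tot}} N_{\mathrm{r}}^{\mathrm{tot}} P_{\mathrm{S}}^{\mathrm{tot}} \sigma_2^2 / (N_{\mathrm{t}}^{\mathrm{tot}} N_{\mathrm{D}}^{\mathrm{tot}} P_{\mathrm{V}}^{\mathrm{tot}} \sigma_1^2)$ stated in the theorem.

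For the interior case, equating $\bar{R}_{\mathrm{S2V}} = \bar{R}_{\mathrm{V2D}}$ and raising to the power $2/\alpha_{\mathrm{LoS}}$ yields $A\, d_{\mathrm{V2D}}^2 = B\, d_{\mathrm{S2V}}^2$, where $A := (N_{\mathrm{S}}^{\mathrm{tot}} N_{\mathrm{r}}^{\mathrm{tot}} P_{\mathrm{S}}^{\mathrm{tot}} / \sigma_1^2)^{2/\alpha_{\mathrm{LoS}}}$ and $B := (N_{\mathrm{t}}^{\mathrm{tot}} N_{\mathrm{D}}^{\mathrm{tot}} P_{\mathrm{V}}^{\mathrm{tot}} / \sigma_2^2)^{2/\alpha_{\mathrm{LoS}}}$. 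Substituting $d_{\mathrm{S2V}}^2 = \rho^2 L + h_{\min}^2$ and $d_{\mathrm{V2D}}^2 = (1-\rho)^2 L + h_{\min}^2$ and collecting terms produces exactly the quadratic $a' \rho^2 + b' \rho + c' = 0$ with coefficients matching the theorem. The degenerate case $A = B$ (equivalently $\kappa = 1$) annihilates $a'$ and the resulting linear equation gives $\rho^\star = 1/2$. The main obstacle I expect is justifying the sign choice in the quadratic formula: I would verify via a Taylor expansion of $\sqrt{b'^2 - 4a' c'}$ that only the branch $\rho^\star = (-b' - \sqrt{b'^2 - 4a' c'})/(2a')$ extends continuously to $1/2$ as $A \to B$, and combine this with the strict monotonicity of $\bar{R}_{\mathrm{S2V}} - \bar{R}_{\mathrm{V2D}}$ to prove uniqueness of the interior zero in $[0,1]$. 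Because $a'$ changes sign across $A = B$, a short additional argument is needed to confirm that the same closed-form expression recovers the correct root on both sides of this boundary.
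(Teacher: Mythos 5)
Your proposal is correct and follows essentially the same route as the paper's Appendix A: restrict the UAV to the segment at minimum altitude, parametrize by $\rho$, split into the two endpoint-dominance cases and the interior equalization case, and solve the resulting quadratic $a'\rho^2+b'\rho+c'=0$. The only differences are that you supply two details the paper merely asserts — the perpendicular-offset decomposition justifying the reduction to the segment, and the selection of the $-$ branch of the quadratic formula (which you could close cleanly by checking that $q(0)=c'>0$ and $q(1)=a'+b'+c'<0$ under the ``otherwise'' condition, so the unique sign change in $[0,1]$ is the stated root for either sign of $a'$).
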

\begin{proof}
See Appendix A.
\end{proof}
Since an LoS environment and ideal beamforming are assumed in Theorem \ref{Theo_posi}, in the following, we refer to \eqref{OptPosition} as the \emph{conditional optimal position} of the FD-UAV relay. However, due to possible obstacles on the ground, the LoS path for the S2V and V2D links may be blocked. Since the existence of an LoS path depends on the actual environment and is not a priori known by the SN, UAV, and DN, it is necessary for the FD-UAV relay to adjust its position if needed. To this end, the UAV is initially deployed to the conditional optimal position $\left(x_{\mathrm{V}}^{\star},y_{\mathrm{V}}^{\star},h_{\mathrm{V}}^{\star}\right)$ and the instantaneous CSI is acquired. If there exist LoS paths for both the S2V and the V2D links, the UAV remains at position $\left(x_{\mathrm{V}}^{\star},y_{\mathrm{V}}^{\star},h_{\mathrm{V}}^{\star}\right)$ as it is optimal for an LoS environment. Otherwise, if an LoS path for the S2V link and/or the V2D link does not exist for position $\left(x_{\mathrm{V}}^{\star},y_{\mathrm{V}}^{\star},h_{\mathrm{V}}^{\star}\right)$, the UAV moves around the initial position until LoS links are established. Specifically, we start an iterative process indexed by $t$. The $t$-th neighborhood for the position of the FD-UAV relay is defined as $\mathcal{C}_{t}=\{(x_{\mathrm{V}}^{\star} \pm i\epsilon_{x}, y_{\mathrm{V}}^{\star} \pm j\epsilon_{y}, h_{\min}+k\epsilon_{h}) \in \mathcal{C} \mid i,j,k=0,1,\cdots,t\}$, where $\epsilon_{x}$, $\epsilon_{y}$, and $\epsilon_{h}$ determine the granularity of the search space for directions $x$, $y$, and $z$, respectively. $\mathcal{C}=[0, x_{\mathrm{D}}]\times[0, y_{\mathrm{D}}]\times[h_{\min}, h_{\max}]$ denotes the feasible region for the position of the FD-UAV relay. During the search, the UAV gradually increases its distance from $\left(x_{\mathrm{V}}^{\star},y_{\mathrm{V}}^{\star},h_{\mathrm{V}}^{\star}\right)$, i.e., index $t$ is increased by 1 in each iteration. The iteration terminates when a point in $\mathcal{C}_{t}$ is found which yields LoS paths for both the S2V and the V2D links, and the selected position of the FD-UAV relay is given by
\begin{equation} \label{AdjPosition}
\left(x_{\mathrm{V}}^{\circ},y_{\mathrm{V}}^{\circ},h_{\mathrm{V}}^{\circ}\right)=\mathop{\mathrm{arg~min}}\limits_{(x,y,h)\in \mathcal{L}_{t}} d_{x,y,h},
\end{equation}
where $\mathcal{L}_{t} \subseteq \mathcal{C}_{t} \setminus \mathcal{C}_{t-1}$ denotes the set of coordinates which yield LoS paths for both the S2V and the V2D links in the $t$-th neighborhood, and $\mathcal{C}_{t} \setminus \mathcal{C}_{t-1}$  contains the elements of $\mathcal{C}_{t}$ that are not included in $\mathcal{C}_{t-1}$. $d_{x,y,h}=\sqrt{(x-x_{\mathrm{V}}^{\star})^{2}+(y-y_{\mathrm{V}}^{\star})^{2}+(h-h_{\mathrm{V}}^{\star})^{2}}$ is the Euclidean distance between the candidate coordinates $(x,y,h)$ and $\left(x_{\mathrm{V}}^{\star},y_{\mathrm{V}}^{\star},h_{\mathrm{V}}^{\star}\right)$. If $\mathcal{L}_{t}$ contains multiple sets of coordinates which have the smallest distance from the initial position, one set of the coordinates is selected at random from these candidates.

Hereto, the position of the FD-UAV relay is determined. Note that the transmit powers at the SN and FD-UAV relay are set to the maximal possible values. However, this may result in a waste of power. For instance, when the achievable rate of the S2V link is always smaller than that of the V2D link, increasing the FD-UVA's transmit power can not enlarge the achievable rate of the DN because the rate is limited by the S2V link. Besides, if the SI is not completely suppressed for non-ideal beamforming, increasing the FD-UAV's transmit power may also increase the interference for the S2V link, and thus the achievable rate decreases. For these reasons, in the following, we first design the BFVs before we optimize the power control to maximize the achievable rate.

\subsection{Beamforming Design}
In this subsection, we design the BFVs for the given coordinates of the FD-UAV relay. It is assumed that full CSI is available at the SN, the DN, and the FD-UAV relay, where both the LoS and NLoS components are considered for the S2V and the V2D links. Due to the non-convex CM constraints and the coupled variables, it is challenging to jointly optimize the BFVs at the SN, UAV, and DN. To address this issue, we propose the AIS algorithm, which employs alternating optimization to design the BFV at the SN, the BFV at the DN, and the Tx/Rx-BFV at the FD-UAV relay. First, we initialize the BFVs with the normalized steering vectors corresponding to the LoS paths for the S2V and V2D channels, i.e.,
\begin{equation} \label{eq_BFV_SN}
\mathbf{w}_{\tau}^{(0)}=\frac{1}{\sqrt{N_{\tau}^{\mathrm{tot}}}}\mathbf{a}_{\tau}(\theta_{\tau}^{(0)},\phi_{\tau}^{(0)}), \tau=\left\{\mathrm{S}, \mathrm{r}, \mathrm{t}, \mathrm{D}\right\}.
\end{equation}

Then, we start an iterative process. Given an SN-BFV, a DN-BFV, and a Tx-BFV, such that the received signal power of the V2D link and the interference from the S2D link are fixed, motivated by \eqref{eq_Rate_V2D}, we optimize the Rx-BFV to maximize the received signal power of the S2V link, while suppressing the SI. Specifically, in the $k$-th iteration, we solve the following problem:
\begin{equation}\label{eq_problem_sub1}
\begin{aligned}
\mathop{\mathrm{Maximize}}\limits_{\mathbf{w}_{\mathrm{r}}}~~~~~ &\left|\mathbf{w}_{\mathrm{r}}^{\mathrm{H}} \mathbf{H}_{\mathrm{S2V}}\mathbf{w}_{\mathrm{S}}^{(k-1)}\right|\\
\mathrm{Subject~ to}~~~~~ &\left | \mathbf{w}_{\mathrm{r}}^{\mathrm{H}}\mathbf{H}_{\mathrm{SI}}\mathbf{w}_{\mathrm{t}}^{(k-1)} \right | \leq \eta^{(k)}_{1}, \\
&\left|\left[ \mathbf{w}_{\mathrm{r}}\right]_{n}\right| \leq \frac{1}{\sqrt{N_{\mathrm{r}}^{\mathrm{tot}}}}, ~ 1 \leq n \leq N_{\mathrm{r}}^{\mathrm{tot}},
\end{aligned}
\end{equation}
where $\mathbf{w}_{\mathrm{S}}^{(k-1)}$ and $\mathbf{w}_{\mathrm{t}}^{(k-1)}$ are the fixed SN-BFV and Tx-BFV obtained in the $(k-1)$-th iteration, respectively, and $\eta^{(k)}_{1}$ is the interference suppression factor. The suppression factor successively decreases in each iteration. Besides, the CM constraint on the BFV is relaxed to a convex constraint in Problem \eqref{eq_problem_sub1}. We will show later that this relaxation has little influence on the performance.

Similarly, given the Rx-BFV obtained in Problem \eqref{eq_problem_sub1}, i.e., $\mathbf{w}_{\mathrm{r}}^{(k)}$, and the DN-BFV $\mathbf{w}_{\mathrm{D}}^{(k-1)}$, such that the received signal power of the S2V link and the interference from the S2D link are fixed, motivated by \eqref{eq_Rate_S2V}, \eqref{eq_Rate_V2D}, we optimize the Tx-BFV to maximize the received signal power of the V2D link, while suppressing the SI. Specifically, we solve the following problem:
\begin{equation}\label{eq_problem_sub2}
\begin{aligned}
\mathop{\mathrm{Maximize}}\limits_{\mathbf{w}_{\mathrm{t}}}~~~~~ &\left | \mathbf{w}_{\mathrm{D}}^{(k-1)\mathrm{H}}\mathbf{H}_{\mathrm{V2D}}\mathbf{w}_{\mathrm{t}} \right |\\
\mathrm{Subject~ to}~~~~~ &\left | \mathbf{w}_{\mathrm{r}}^{(k)\mathrm{H}}\mathbf{H}_{\mathrm{SI}}\mathbf{w}_{\mathrm{t}} \right | \leq \eta^{(k)}_{2}, \\
&\left|\left[ \mathbf{w}_{\mathrm{t}}\right]_{n}\right| \leq \frac{1}{\sqrt{N_{\mathrm{t}}^{\mathrm{tot}}}}, ~ 1 \leq n \leq N_{\mathrm{t}}^{\mathrm{tot}},
\end{aligned}
\end{equation}
where $\eta^{(k)}_{2}$ is the interference suppression factor.

After obtaining the Rx-BFV $\mathbf{w}_{\mathrm{r}}^{(k)}$ and the Tx-BFV $\mathbf{w}_{\mathrm{t}}^{(k)}$ in the $k$-th iteration, we optimize the SN-BFV and DN-BFV in a similar manner. Specifically, given the fixed DN-BFV $\mathbf{w}_{\mathrm{D}}^{(k-1)}$, we optimize the SN-BFV to maximize the received signal power of the S2V link, while suppressing the interference caused by the S2D link, i.e.,
\begin{equation}\label{eq_problem_sub3}
\begin{aligned}
\mathop{\mathrm{Maximize}}\limits_{\mathbf{w}_{\mathrm{S}}}~~~~~ &\left|\mathbf{w}_{\mathrm{r}}^{(k)\mathrm{H}} \mathbf{H}_{\mathrm{S2V}}\mathbf{w}_{\mathrm{S}}\right|\\
\mathrm{Subject~ to}~~~~~ &\left | \mathbf{w}_{\mathrm{D}}^{(k-1)\mathrm{H}}\mathbf{H}_{\mathrm{S2D}}\mathbf{w}_{\mathrm{S}} \right | \leq \eta^{(k)}_{3}, \\
&\left|\left[ \mathbf{w}_{\mathrm{S}}\right]_{n}\right| \leq \frac{1}{\sqrt{N_{\mathrm{S}}^{\mathrm{tot}}}}, ~ 1 \leq n \leq N_{\mathrm{S}}^{\mathrm{tot}},
\end{aligned}
\end{equation}

Finally, we optimize the DN-BFV to maximize the received signal power of the V2D link, while suppressing the interference caused by the S2D link, i.e.,
\begin{equation}\label{eq_problem_sub4}
\begin{aligned}
\mathop{\mathrm{Maximize}}\limits_{\mathbf{w}_{\mathrm{D}}}~~~~~ &\left | \mathbf{w}_{\mathrm{D}}^{\mathrm{H}}\mathbf{H}_{\mathrm{V2D}}\mathbf{w}_{\mathrm{t}}^{(k)} \right |\\
\mathrm{Subject~ to}~~~~~ &\left | \mathbf{w}_{\mathrm{D}}^{\mathrm{H}}\mathbf{H}_{\mathrm{S2D}}\mathbf{w}_{\mathrm{S}}^{(k)} \right | \leq \eta^{(k)}_{4}, \\
&\left|\left[ \mathbf{w}_{\mathrm{D}}\right]_{n}\right| \leq \frac{1}{\sqrt{N_{\mathrm{D}}^{\mathrm{tot}}}}, ~ 1 \leq n \leq N_{\mathrm{D}}^{\mathrm{tot}},
\end{aligned}
\end{equation}

To ensure that the interferences from the SI channel and the S2D channel are reduced in each iteration, we set $\eta^{(k)}_{i}=\eta+\mu^{(k)}_{i}$ for $i=\left\{1,2,3,4\right\}$, where $\eta$ is a nonnegative lower bound for the interference suppression factor. One possible choice is $\mu^{(k)}_{1}=\frac{\mu^{(k-1)}_{2}}{\kappa}$, $\mu^{(k)}_{2}=\frac{\mu^{(k)}_{1}}{\kappa}$, $\mu^{(k)}_{3}=\frac{\mu^{(k-1)}_{4}}{\kappa}$, and $\mu^{(k)}_{4}=\frac{\mu^{(k)}_{3}}{\kappa}$, where $\kappa$ is defined as the step size for the reduction of the interference suppression factor. The iterative process can be stopped when the increase of the achievable rate is no larger than a threshold $\epsilon_{r}$.

Problems \eqref{eq_problem_sub1}, \eqref{eq_problem_sub2}, \eqref{eq_problem_sub3}, and \eqref{eq_problem_sub4} have a similar form. Thus, we only develop the solution of Problem \eqref{eq_problem_sub1} in detail, and the other problems can be solved in the same manner. For Problem \eqref{eq_problem_sub1}, a convex objective function is maximized, which makes it a non-convex problem \cite{boyd2004convex}. Fortunately, a phase rotation of the BFVs does not impact the optimality of this problem. If $\mathbf{w}_{\mathrm{r}}^{\mathrm{\star}}$ is an optimal solution, then $\mathbf{w}_{\mathrm{r}}^{\mathrm{\star}}e^{j\pi \omega}$ is also an optimal solution. Exploiting this property, we can always find an optimal solution, where the argument of the magnitude operator $|\cdot|$ in the objective function of Problem \eqref{eq_problem_sub1} is a real number. Then, Problem \eqref{eq_problem_sub1} becomes equivalent to
\begin{equation}\label{eq_problem_sub1_eq}
\begin{aligned}
\mathop{\mathrm{Maximize}}\limits_{\mathbf{w}_{\mathrm{r}}}~~~~~ &\mathfrak{R}\left(\mathbf{w}_{\mathrm{r}}^{\mathrm{H}} \mathbf{H}_{\mathrm{S2V}}\mathbf{w}_{\mathrm{S}}^{(k-1)}\right)\\
\mathrm{Subject~ to}~~~~~ &\left | \mathbf{w}_{\mathrm{r}}^{\rm{H}}\mathbf{H}_{\mathrm{SI}}\mathbf{w}_{\mathrm{t}}^{(k-1)} \right | \leq \eta^{(k)}_{1},\\
&\left|\left[ \mathbf{w}_{\mathrm{r}}\right]_{n}\right| \leq \frac{1}{\sqrt{N_{\mathrm{r}}^{\mathrm{tot}}}}, ~ 1 \leq n \leq N_{\mathrm{r}}^{\mathrm{tot}},
\end{aligned}
\end{equation}
where $\mathfrak{R}(\cdot)$ denotes the real part of a complex number. Problem \eqref{eq_problem_sub1_eq} is a convex problem and can be solved by utilizing standard optimization tools such as CVX \cite{boyd2004convex}.

After obtaining the optimal solution of Problems \eqref{eq_problem_sub1}, \eqref{eq_problem_sub2}, \eqref{eq_problem_sub3}, and \eqref{eq_problem_sub4}, which we denote by $\mathbf{w}_{\mathrm{r}}^{\mathrm{\circ}}$, $\mathbf{w}_{\mathrm{t}}^{\mathrm{\circ}}$, $\mathbf{w}_{\mathrm{S}}^{\mathrm{\circ}}$, and $\mathbf{w}_{\mathrm{D}}^{\mathrm{\circ}}$, respectively, we normalize the modulus of the BFVs' elements to satisfy the CM constraint, i.e.,
\begin{equation}\label{eq_normal1}
\left[ \mathbf{w}_{\tau}^{(k)}\right]_{n} = \frac{1}{\sqrt{N_{\tau}^{\mathrm{tot}}}}\frac{\left[ \mathbf{w}_{\tau}^{\mathrm{\circ}}\right]_{n}}{\left|\left[ \mathbf{w}_{\tau}^{\mathrm{\circ}}\right]_{n}\right|}, ~ 1 \leq n \leq N_{\tau}^{\mathrm{tot}}, \tau=\left\{\mathrm{S}, \mathrm{r}, \mathrm{t}, \mathrm{D}\right\}.
\end{equation}

During the alternating optimization of the Tx-BFV and Rx-BFV in Problems \eqref{eq_problem_sub1} and \eqref{eq_problem_sub2}, respectively, the SI at the FD-UAV relay decreases successively, because the interference suppression factor decreases in each iteration. Similarly, the interference from the S2D link decreases successively, benefiting from the alternating optimization of the SN-BFV and DN-BFV in Problems \eqref{eq_problem_sub3} and \eqref{eq_problem_sub4}, respectively. Meanwhile, the beam gains of the target signals are maximized. With the AIS algorithm, the interference suppression factor finally converges to its lower bound $\eta$, and thus the powers of the SI and the interference from the S2D link are no larger than $\eta^2 P^{\mathrm{tot}}_{\mathrm{V}}$ and $\eta^2 P^{\mathrm{tot}}_{\mathrm{S}}$, respectively. To maximize the achievable rate, the interference powers should be restricted to be smaller than the noise powers, i.e., $\eta^2 P^{\mathrm{tot}}_{\mathrm{V}} < \sigma_{1}^2$ and $\eta^2 P^{\mathrm{tot}}_{\mathrm{S}}<\sigma_{2}^2$. Hence, a small $\eta$ is preferable to minimize the influence of the SI. However, a too small value of $\eta$ leads to smaller gains of the target signals because of the stricter interference constraints in \eqref{eq_problem_sub1}, \eqref{eq_problem_sub2}, \eqref{eq_problem_sub3}, and \eqref{eq_problem_sub4}. In fact, there is a tradeoff between the powers of the interferences and the powers of the target signals.

Now, the influence of the relaxation and normalization of the BFVs remains to be analyzed. To this end, we provide the following theorem.
\begin{theorem} \label{Theo_modulus}
There always exists an optimal solution of Problem \eqref{eq_problem_sub1}, where at most one element of the optimal BFV does not satisfy the CM constraint.
\end{theorem}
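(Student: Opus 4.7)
The plan is to recognize Problem~\eqref{eq_problem_sub1_eq} as a real-linear objective maximized over a compact convex feasible set, and then exploit the structure of its extreme points to bound the number of entries of the optimal beamforming vector that can lie strictly inside their magnitude disks. Denote $\mathbf{h}_1 := \mathbf{H}_{\mathrm{S2V}}\mathbf{w}_{\mathrm{S}}^{(k-1)}$, $\mathbf{h}_2 := \mathbf{H}_{\mathrm{SI}}\mathbf{w}_{\mathrm{t}}^{(k-1)}$, $\eta := \eta_1^{(k)}$, $N := N_{\mathrm{r}}^{\mathrm{tot}}$, and
\[
F := \bigl\{\mathbf{w}_{\mathrm{r}} \in \mathbb{C}^N : |\mathbf{w}_{\mathrm{r}}^{\rm{H}}\mathbf{h}_2| \leq \eta,\ |[\mathbf{w}_{\mathrm{r}}]_n| \leq 1/\sqrt{N},\ \forall n\bigr\}.
\]
Since $F$ is a closed, bounded, and convex subset of $\mathbb{R}^{2N}$ while the objective $\mathfrak{R}(\mathbf{w}_{\mathrm{r}}^{\rm{H}}\mathbf{h}_1)$ is real-linear, Bauer's maximum principle guarantees an optimum $\mathbf{w}_{\mathrm{r}}^{\star}$ that is an extreme point of $F$.

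The next step is to show that $\mathcal{I} := \{n : |[\mathbf{w}_{\mathrm{r}}^{\star}]_n| < 1/\sqrt{N}\}$ has cardinality at most one. Split into two cases. If the SI constraint is inactive at $\mathbf{w}_{\mathrm{r}}^{\star}$, a neighborhood of $\mathbf{w}_{\mathrm{r}}^{\star}$ in $F$ coincides with one in the product of disks $\prod_n\{|z| \leq 1/\sqrt{N}\}$, whose extreme points are exactly the tuples with every entry on its boundary, so $\mathcal{I} = \emptyset$. If the SI constraint is active, so that $|\mathbf{w}_{\mathrm{r}}^{\star{\rm H}}\mathbf{h}_2| = \eta$, suppose for contradiction that $|\mathcal{I}| \geq 2$. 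Consider perturbations $\boldsymbol{\delta}\in \mathbb{C}^N$ supported on $\mathcal{I}$ which preserve the inner product $\mathbf{w}_{\mathrm{r}}^{\rm{H}}\mathbf{h}_2$ exactly, i.e., satisfy $\sum_{n\in\mathcal{I}} \overline{\delta_n}[\mathbf{h}_2]_n = 0$. This is a single complex (two real) linear equation on $2|\mathcal{I}| \geq 4$ real unknowns, so it admits a nonzero solution $\boldsymbol{\delta}$. For sufficiently small $\epsilon > 0$, the entries of $\mathbf{w}_{\mathrm{r}}^{\star} \pm \epsilon\boldsymbol{\delta}$ indexed by $\mathcal{I}$ remain strictly inside their disks while entries outside $\mathcal{I}$ are unchanged, and the SI term equals $\mathbf{w}_{\mathrm{r}}^{\star{\rm H}}\mathbf{h}_2$ by construction; hence both vectors lie in $F$. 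This exhibits $\mathbf{w}_{\mathrm{r}}^{\star}$ as the midpoint of a nondegenerate segment in $F$, contradicting its extremality. Therefore $|\mathcal{I}| \leq 1$, and since the optima of \eqref{eq_problem_sub1} and \eqref{eq_problem_sub1_eq} differ only by a global phase rotation (which preserves entrywise magnitudes), the theorem follows.

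The main difficulty lies in cleanly executing the dimension-counting argument for extreme points of the SOC-constrained feasible set $F$, ensuring that the candidate perturbation preserves both the SI equality and all disk inequalities. The crucial simplification is that, once the value of $\mathbf{w}_{\mathrm{r}}^{\star{\rm H}}\mathbf{h}_2$ at the extreme point is fixed, the condition that perturbations preserve it becomes a linear (rather than quadratic) equation contributing only two real scalar constraints; this is precisely why the SI constraint can ``absorb'' the slack of at most one complex entry. Edge cases such as $[\mathbf{h}_2]_n = 0$ for all $n\in\mathcal{I}$ only enlarge the perturbation freedom and do not alter the conclusion.
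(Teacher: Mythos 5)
Your proof is correct, and it takes a genuinely different route from the paper's. The paper argues by contradiction on an arbitrary optimal solution: it isolates two non-CM entries, reduces to the two-variable subproblem \eqref{eq_problem_sub1_pro} with the SI value pinned to $l_2 e^{j\omega_2}$, and then splits on whether $\frac{[\mathbf{h}_{\mathrm{S2V}}]_{\pi_1}}{[\mathbf{h}_{\mathrm{S2V}}]_{\pi_2}}$ equals $\frac{[\mathbf{h}_{\mathrm{SI}}]_{\pi_1}}{[\mathbf{h}_{\mathrm{SI}}]_{\pi_2}}$ — ruling out the unequal case by a $\pm\delta$ perturbation of a single entry (essentially a one-dimensional version of your argument), and in the equal case explicitly \emph{constructing} a rephased pair of entries via the triangle inequality (its Fig.~10) that attains the same two inner products with at most one slack modulus, iterating until at most one interior entry remains. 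You instead pass to the equivalent real-linear program \eqref{eq_problem_sub1_eq}, invoke Bauer's maximum principle to land on an extreme point of the feasible set, and kill two or more slack disk constraints at once by counting: freezing the complex value of $\mathbf{w}_{\mathrm{r}}^{\mathrm{H}}\mathbf{h}_2$ costs only two real equations, leaving a nonzero feasible perturbation whenever $2|\mathcal{I}|\geq 4$, which contradicts extremality. Your version is shorter, sidesteps the paper's delicate constant-ratio case analysis entirely, and generalizes cleanly (with $m$ interference constraints one gets at most $m$ slack entries); the paper's version is constructive — Lemma~2 tells you \emph{how} to repair a given optimal BFV, which is the form actually used to justify the normalization step \eqref{eq_normal1} — and works directly with the modulus objective. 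Both correctly dispose of the modulus-vs-real-part issue by the global phase rotation, which preserves entrywise magnitudes.
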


\begin{proof}
See Appendix B.
\end{proof}

Theorem \ref{Theo_modulus} suggests that the relaxation and normalization of the BFVs in \eqref{eq_normal1} have little influence on the rate performance because they impact at most one of their elements. In particular, when the number of antennas is large, the impact of a single element's normalization on the effective channel gain is small.

\subsection{Power Control}
As we have discussed before, to maximize the achievable rate from the SN to the DN and to avoid a waste of transmit power, the power control at the SN and FD-UAV relay should be carefully designed. Substituting the designed BFVs into \eqref{eq_Rate_S2V} and \eqref{eq_Rate_V2D}, we obtain the achievable rates of the S2V and V2D links as follows
\begin{equation}\label{eq_Rate_S2V3}
\tilde{R}_{\mathrm{S2V}}=\log_{2}\left (1+ \frac{G_{\mathrm{S2V}} P_{\mathrm{S}}}{G_{\mathrm{SI}} P_{\mathrm{V}}+\sigma_{1}^{2}}\right ),
\end{equation}
\begin{equation}\label{eq_Rate_V2D3}
\tilde{R}_{\mathrm{V2D}}=\log_{2}\left (1+ \frac{G_{\mathrm{V2D}} P_{\mathrm{V}}}{G_{\mathrm{S2D}}P_{\mathrm{S}}+\sigma_{2}^{2}}\right ),
\end{equation}
where $G_{\mathrm{S2V}}=\left | \mathbf{w}_{\mathrm{r}}^{(k)\rm{H}}\mathbf{H}_{\mathrm{S2V}}\mathbf{w}_{\mathrm{S}}^{(k)} \right |^{2}$, $G_{\mathrm{SI}}=\left | \mathbf{w}_{\mathrm{r}}^{(k)\mathrm{H}}\mathbf{H}_{\mathrm{SI}}\mathbf{w}_{\mathrm{t}}^{(k)} \right |^{2}$, $G_{\mathrm{V2D}}=\left | \mathbf{w}_{\mathrm{D}}^{(k)\mathrm{H}}\mathbf{H}_{\mathrm{V2D}}\mathbf{w}_{\mathrm{t}}^{(k)} \right |^{2}$, and $G_{\mathrm{S2D}}=\left | \mathbf{w}_{\mathrm{D}}^{(k)\rm{H}}\mathbf{H}_{\mathrm{S2D}}\mathbf{w}_{\mathrm{S}}^{(k)} \right |^{2}$.

To maximize the minimum of $\tilde{R}_{\mathrm{S2V}}$ and $\tilde{R}_{\mathrm{V2D}}$ as well as minimize the total transmit power, we provide the following theorem.

\begin{theorem} \label{Theo_power}
For given position and BFVs, the optimal power allocation for the SN and FD-UAV relay is given as follows

\begin{equation}\label{eq_opt_power} \footnotesize
\begin{aligned}
&\left\{
\begin{aligned}
&P_{\mathrm{S}}^{\star}=P_{\mathrm{S}}^{\mathrm{tot}},\\
&P_{\mathrm{V}}^{\star}=\frac{-b_{1}+\sqrt{b_{1}^2-4a_{1}c_{1}}}{2a_{1}},
\end{aligned}
\right.
~\text{if}~\frac{G_{\mathrm{S2V}} P_{\mathrm{S}}^{\mathrm{tot}}}{G_{\mathrm{SI}} P_{\mathrm{V}}^{\mathrm{tot}}+\sigma_{1}^{2}}<\frac{G_{\mathrm{V2D}} P_{\mathrm{V}}^{\mathrm{tot}}}{G_{\mathrm{S2D}}P_{\mathrm{S}}^{\mathrm{tot}}+\sigma_{2}^{2}};\\
&\left\{
\begin{aligned}
&P_{\mathrm{S}}^{\star}=\frac{-b_{2}+\sqrt{b_{2}^2-4a_{2}c_{2}}}{2a_{2}},\\
&P_{\mathrm{V}}^{\star}=P_{\mathrm{V}}^{\mathrm{tot}},
\end{aligned}
\right.
~\text{if}~\frac{G_{\mathrm{S2V}} P_{\mathrm{S}}^{\mathrm{tot}}}{G_{\mathrm{SI}} P_{\mathrm{V}}^{\mathrm{tot}}+\sigma_{1}^{2}} \geq \frac{G_{\mathrm{V2D}} P_{\mathrm{V}}^{\mathrm{tot}}}{G_{\mathrm{S2D}}P_{\mathrm{S}}^{\mathrm{tot}}+\sigma_{2}^{2}};
\end{aligned}
\end{equation}
where $a_{1}=G_{\mathrm{SI}}G_{\mathrm{V2D}}$, $b_{1}=G_{\mathrm{V2D}}\sigma_{1}^{2}$, $c_{1}=-G_{\mathrm{S2V}}P_{\mathrm{S}}^{\mathrm{tot}}\left(G_{\mathrm{S2D}}P_{\mathrm{S}}^{\mathrm{tot}}+\sigma_{2}^{2}\right)$, and $a_{2}=G_{\mathrm{S2D}}G_{\mathrm{S2V}}$, $b_{2}=G_{\mathrm{S2V}}\sigma_{2}^{2}$, $c_{2}=-G_{\mathrm{V2D}}P_{\mathrm{V}}^{\mathrm{tot}}\left(G_{\mathrm{SI}}P_{\mathrm{V}}^{\mathrm{tot}}+\sigma_{1}^{2}\right)$.
\end{theorem}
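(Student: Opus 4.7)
The plan is to reduce the joint power optimization to a one-dimensional search by a monotonicity argument, then locate the optimum either as a corner or as an equalization point. First I would note that $\tilde{R}_{\mathrm{S2V}}$ is strictly increasing in $P_{\mathrm{S}}$ and strictly decreasing in $P_{\mathrm{V}}$ (through the SI term $G_{\mathrm{SI}}P_{\mathrm{V}}$), while $\tilde{R}_{\mathrm{V2D}}$ is strictly increasing in $P_{\mathrm{V}}$ and strictly decreasing in $P_{\mathrm{S}}$ (through the residual S2D interference $G_{\mathrm{S2D}}P_{\mathrm{S}}$). The secondary objective of minimizing total power rules out uselessly inflating the bottleneck-link transmitter.

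The core structural claim is that at any maximizer at least one of the power budget constraints is active. Suppose for contradiction that $P_{\mathrm{S}}<P_{\mathrm{S}}^{\mathrm{tot}}$ and $P_{\mathrm{V}}<P_{\mathrm{V}}^{\mathrm{tot}}$. If the two rates are unequal, a small nudge of the larger-marginal-value transmit power increases the current minimum and keeps feasibility, a contradiction. If the two rates coincide, the joint scaling $(P_{\mathrm{S}},P_{\mathrm{V}})\!\mapsto\!((1+\epsilon)P_{\mathrm{S}},(1+\epsilon)P_{\mathrm{V}})$ strictly increases both SINRs, because in each SINR the numerator scales by $(1+\epsilon)$ while the denominator, which contains the strictly positive additive noise term, scales by strictly less than $(1+\epsilon)$. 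For sufficiently small $\epsilon$ this remains feasible and strictly improves the objective.

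Given that at least one constraint is tight, the two cases of the theorem correspond to which one binds, and the hypothesis separating them is exactly the comparison of the two SINRs at the full-power corner $(P_{\mathrm{S}}^{\mathrm{tot}},P_{\mathrm{V}}^{\mathrm{tot}})$. In the first case I would fix $P_{\mathrm{S}}=P_{\mathrm{S}}^{\mathrm{tot}}$ and view the min as a function of $P_{\mathrm{V}}\in[0,P_{\mathrm{V}}^{\mathrm{tot}}]$: $\tilde{R}_{\mathrm{S2V}}$ decreases in $P_{\mathrm{V}}$ while $\tilde{R}_{\mathrm{V2D}}$ increases. Since at $P_{\mathrm{V}}^{\mathrm{tot}}$ we have $\tilde{R}_{\mathrm{S2V}}<\tilde{R}_{\mathrm{V2D}}$ and at $P_{\mathrm{V}}=0$ we have $\tilde{R}_{\mathrm{V2D}}=0<\tilde{R}_{\mathrm{S2V}}$, continuity yields a unique crossing $P_{\mathrm{V}}^{\star}\in(0,P_{\mathrm{V}}^{\mathrm{tot}})$, which is the unique maximizer of the min (moving left makes $\tilde{R}_{\mathrm{V2D}}$ the min and smaller, moving right keeps $\tilde{R}_{\mathrm{S2V}}$ as the min and shrinks it). Equating the two SINRs and cross-multiplying produces the quadratic $G_{\mathrm{SI}}G_{\mathrm{V2D}}P_{\mathrm{V}}^{2}+G_{\mathrm{V2D}}\sigma_{1}^{2}P_{\mathrm{V}}-G_{\mathrm{S2V}}P_{\mathrm{S}}^{\mathrm{tot}}\bigl(G_{\mathrm{S2D}}P_{\mathrm{S}}^{\mathrm{tot}}+\sigma_{2}^{2}\bigr)=0$, which matches $a_{1}P_{\mathrm{V}}^{2}+b_{1}P_{\mathrm{V}}+c_{1}=0$. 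Since $a_{1}>0$ and $c_{1}<0$, exactly one root is positive, giving the stated closed form. The second case is fully symmetric, with $P_{\mathrm{V}}$ pinned at $P_{\mathrm{V}}^{\mathrm{tot}}$ and $P_{\mathrm{S}}$ the free variable.

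The main obstacle I expect is the contradiction step showing that both budgets cannot be simultaneously slack when the two rates happen to be equal; the key trick there is the proportional scaling perturbation together with exploiting that the denominators contain strictly positive noise. A smaller bookkeeping point is checking that the positive root $P_{\mathrm{V}}^{\star}$ actually lies in $(0,P_{\mathrm{V}}^{\mathrm{tot}}]$ in Case 1, which follows immediately from evaluating the quadratic at $P_{\mathrm{V}}^{\mathrm{tot}}$ and observing that the case hypothesis $\frac{G_{\mathrm{S2V}}P_{\mathrm{S}}^{\mathrm{tot}}}{G_{\mathrm{SI}}P_{\mathrm{V}}^{\mathrm{tot}}+\sigma_{1}^{2}}<\frac{G_{\mathrm{V2D}}P_{\mathrm{V}}^{\mathrm{tot}}}{G_{\mathrm{S2D}}P_{\mathrm{S}}^{\mathrm{tot}}+\sigma_{2}^{2}}$ is exactly the statement that the quadratic is positive there, placing the positive root strictly below $P_{\mathrm{V}}^{\mathrm{tot}}$.
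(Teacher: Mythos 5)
Your proposal is correct and follows essentially the same route as the paper's proof: show by a perturbation/scaling argument that at least one power budget must be tight, then equalize $\tilde{R}_{\mathrm{S2V}}$ and $\tilde{R}_{\mathrm{V2D}}$ along the active segment to obtain the quadratic in the free power variable. You actually supply the detail the paper leaves as ``it can be verified'' (the noise term in each denominator makes the proportional scaling strictly increase both SINRs) and the root-location check, so no gap.
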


\begin{proof}
Note that our goal is to maximize the minimum of $\tilde{R}_{\mathrm{S2V}}$ and $\tilde{R}_{\mathrm{V2D}}$. Assume that the optimal transmit powers at the SN and the FD-UAV relay are both smaller than their maximum values, i.e., $P_{\mathrm{S}}^{\star}<P_{\mathrm{S}}^{\mathrm{tot}}$ and $P_{\mathrm{V}}^{\star}<P_{\mathrm{V}}^{\mathrm{tot}}$. We set $P_{\mathrm{S}}^{\circ}=(1+\delta)P_{\mathrm{S}}^{\star}$ and $P_{\mathrm{V}}^{\circ}=(1+\delta)P_{\mathrm{V}}^{\star}$, where $\delta$ is positive and small enough to ensure that $\left(P_{\mathrm{S}}^{\circ},P_{\mathrm{V}}^{\circ}\right)$ do not exceed the maximum values of the transmit powers. It can be verified that $\left(P_{\mathrm{S}}^{\circ},P_{\mathrm{V}}^{\circ}\right)$ yield a larger achievable rate than $\left(P_{\mathrm{S}}^{\star},P_{\mathrm{V}}^{\star}\right)$, which contradicts the assumption that $\left(P_{\mathrm{S}}^{\star},P_{\mathrm{V}}^{\star}\right)$ is optimal. Thus, we conclude that for the optimal power allocation, at least one of the transmit powers assumes the maximum possible value.

When $\frac{G_{\mathrm{S2V}} P_{\mathrm{S}}^{\mathrm{tot}}}{G_{\mathrm{SI}} P_{\mathrm{V}}^{\mathrm{tot}}+\sigma_{1}^{2}}<\frac{G_{\mathrm{V2D}} P_{\mathrm{V}}^{\mathrm{tot}}}{G_{\mathrm{S2D}}P_{\mathrm{S}}^{\mathrm{tot}}+\sigma_{2}^{2}}$, we have $\tilde{R}_{\mathrm{S2V}}<\tilde{R}_{\mathrm{V2D}}$. Thus, $P_{\mathrm{S}}^{\star}=P_{\mathrm{S}}^{\mathrm{tot}}$ maximizes the achievable rate of the S2V link. Meanwhile, to avoid the waste of transmit power and to maximize the achievable rate, $P_{\mathrm{V}}$ should be reduced, whereby the achievable rate of the V2D link decreases while the achievable rate of the S2V link increases. Solving equation ${\tilde{R}_{\mathrm{V2D}}=\tilde{R}_{\mathrm{S2V}}}$ for $P_{\mathrm{S}}^{\star}=P_{\mathrm{S}}^{\mathrm{tot}}$, we obtain the optimal transmit power of the FD-UAV relay as $P_{\mathrm{V}}^{\star}=\frac{-b_{1}+\sqrt{b_{1}^2-4a_{1}c_{1}}}{2a_{1}}$.

Similarly, when $\frac{G_{\mathrm{S2V}} P_{\mathrm{S}}^{\mathrm{tot}}}{G_{\mathrm{SI}} P_{\mathrm{V}}^{\mathrm{tot}}+\sigma_{1}^{2}} \geq \frac{G_{\mathrm{V2D}} P_{\mathrm{V}}^{\mathrm{tot}}}{G_{\mathrm{S2D}}P_{\mathrm{S}}^{\mathrm{tot}}+\sigma_{2}^{2}}$, we have $\tilde{R}_{\mathrm{S2V}}\geq\tilde{R}_{\mathrm{V2D}}$. Thus, $P_{\mathrm{V}}^{\star}=P_{\mathrm{V}}^{\mathrm{tot}}$ maximizes the achievable rate of the V2D link. Meanwhile, to avoid the waste of transmit power and to maximize the achievable rate, $P_{\mathrm{S}}$ should be reduced. Then, the achievable rate of the S2V link decreases while the achievable rate of the V2D link increases. Solving equation ${\tilde{R}_{\mathrm{V2D}}=\tilde{R}_{\mathrm{S2V}}}$ for $P_{\mathrm{V}}^{\star}=P_{\mathrm{V}}^{\mathrm{tot}}$, the optimal transmit power of the SN is obtained as $P_{\mathrm{S}}^{\star}=\frac{-b_{2}+\sqrt{b_{2}^2-4a_{2}c_{2}}}{2a_{2}}$. This concludes the proof.
\end{proof}

Hereto, we have obtained the optimal solution of the transmit power variables.

\subsection{Overall Solution}
We summarize the overall solution of the joint positioning, beamforming, and power control problem for mmWave FD-UAV relay systems in Algorithm 1. In line 1, we obtain the conditional optimal position of the FD-UAV relay based on Theorem \ref{Theo_posi}, assuming an LoS environment and ideal beamforming. In lines 2-11, we find the position of the FD-UAV relay in a neighborhood of the conditional optimal position. Then, in lines 17-31, we successively decrease the interferences by alternately solving Problems \eqref{eq_problem_sub1}, \eqref{eq_problem_sub2}, \eqref{eq_problem_sub3}, and \eqref{eq_problem_sub4}, where the optimal power allocation according to Theorem \ref{Theo_power} is incorporated in each iteration to maximize the achievable rate, see line 29. Note that the position of the FD-UAV relay is not updated during the iterative process as the obtained solution achieves a near-optimal performance if the proposed algorithm approaches ideal beamforming. The algorithm terminates if the improvement in the achievable rate from one iteration to the next falls below a threshold $\epsilon_{r}$. The convergence of Algorithm 1 will be studied via simulations in Section IV.

\begin{algorithm}[h] \small
\caption{Joint positioning, beamforming, and power control for mmWave FD-UAV relay systems.}
\begin{algorithmic}[1]
\REQUIRE ~$M_{\mathrm{S}}$, $N_{\mathrm{S}}$, $M_{\mathrm{D}}$, $N_{\mathrm{D}}$, $M_{\mathrm{t}}$, $N_{\mathrm{t}}$, $M_{\mathrm{r}}$, $N_{\mathrm{r}}$, $x_{\mathrm{D}}$, $y_{\mathrm{D}}$,\\ $h_{\min}$, $h_{\max}$ $P_{\mathrm{S}}^{\mathrm{tot}}$, $P_{\mathrm{V}}^{\mathrm{tot}}$, $\sigma_{1}$, $\sigma_{2}$, $f_{c}$ $\alpha_{\mathrm{LoS}}$, $\alpha_{\mathrm{NLoS}}$,\\ $\sigma_{f}$, $a$, $b$, $\epsilon_{x}$, $\epsilon_{y}$, $\epsilon_{h}$, $\eta$, $\kappa$, $\epsilon_{r}$.
\ENSURE ~${x_{\mathrm{V}}^{\mathrm{\circ}}, y_{\mathrm{V}}^{\mathrm{\circ}}, h_{\mathrm{V}}^{\mathrm{\circ}}, \mathbf{w}_{\mathrm{S}}^{\mathrm{\star}}, \mathbf{w}_{\mathrm{D}}^{\mathrm{\star}}, \mathbf{w}_{\mathrm{r}}^{\mathrm{\star}}, \mathbf{w}_{\mathrm{t}}^{\mathrm{\star}}}$, $P_{\mathrm{S}}^{\star}$, $P_{\mathrm{V}}^{\star}$. \\
\STATE Calculate $\left(x_{\mathrm{V}}^{\mathrm{\star}}, y_{\mathrm{V}}^{\mathrm{\star}}, h_{\mathrm{V}}^{\mathrm{\star}}\right)$ based on Theorem \ref{Theo_posi}.
\IF  {$\left(x_{\mathrm{V}}^{\mathrm{\star}}, y_{\mathrm{V}}^{\mathrm{\star}}, h_{\mathrm{V}}^{\mathrm{\star}}\right)$ has an LoS environment}
\STATE Set $\left(x_{\mathrm{V}}^{\mathrm{\circ}}, y_{\mathrm{V}}^{\mathrm{\circ}}, h_{\mathrm{V}}^{\mathrm{\circ}}\right)=\left(x_{\mathrm{V}}^{\mathrm{\star}}, y_{\mathrm{V}}^{\mathrm{\star}}, h_{\mathrm{V}}^{\mathrm{\star}}\right)$.
\ELSE
\STATE Initialize $t=0$, $\mathcal{C}_{0}=\left\{\left(x_{\mathrm{V}}^{\mathrm{\star}}, y_{\mathrm{V}}^{\mathrm{\star}}, h_{\mathrm{V}}^{\mathrm{\star}}\right)\right\}$, $\mathcal{L}_{0}={\O}$.
\WHILE {$\mathcal{L}_{t}$ is empty}
\STATE Update $t=t+1$.
\STATE Obtain $\mathcal{C}_{t}$ and $\mathcal{L}_{t}$.
\ENDWHILE
\STATE Determine $\left(x_{\mathrm{V}}^{\mathrm{\circ}}, y_{\mathrm{V}}^{\mathrm{\circ}}, h_{\mathrm{V}}^{\mathrm{\circ}}\right)$ based on \eqref{AdjPosition}.
\ENDIF
\STATE Estimate channel matrices $\mathbf{H}_{\mathrm{S2V}}$, $\mathbf{H}_{\mathrm{V2D}}$, $\mathbf{H}_{\mathrm{S2D}}$, and $\mathbf{H}_{\mathrm{SI}}$.
\STATE Initialize $k=0$.
\STATE Initialize $\mathbf{w}_{\mathrm{S}}^{(0)}$, $\mathbf{w}_{\mathrm{D}}^{(0)}$, $\mathbf{w}_{\mathrm{r}}^{(0)}$ and $\mathbf{w}_{\mathrm{t}}^{(0)}$ according to \eqref{eq_BFV_SN}.
\STATE Initialize $\mu^{(0)}_{2}=\left | \mathbf{w}_{\mathrm{r}}^{\rm{(0)H}}\mathbf{H}_{\mathrm{SI}}\mathbf{w}_{\mathrm{t}}^{(0)} \right |$.
\STATE Calculate $R_{\mathrm{S2D}}^{(0)}$ according to \eqref{eq_Rate_S2D} and define $R_{\mathrm{S2D}}^{(-1)}=-\infty$.
\WHILE {$R_{\mathrm{S2D}}^{(k)}-R_{\mathrm{S2D}}^{(k-1)}>\epsilon_{r}$}
\STATE $k=k+1$.
\STATE Update the suppression factor $\mu^{(k)}_{i}=\frac{\mu^{(k-1)}_{i+1}}{\kappa}$ and $\eta^{(k)}_{i} = \eta+ \mu^{(k)}_{i}$ for $i=1,3$.
\STATE Update the suppression factor $\mu^{(k)}_{i}=\frac{\mu^{(k)}_{i-1}}{\kappa}$ and $\eta^{(k)}_{i} = \eta+ \mu^{(k)}_{i}$ $i=2,4$.
\STATE Solve Problem \eqref{eq_problem_sub1} to obtain $\mathbf{w}_{\mathrm{r}}^{\circ}$.
\STATE Normalize $\mathbf{w}_{\mathrm{r}}^{\circ}$ according to \eqref{eq_normal1} and obtain $\mathbf{w}_{\mathrm{r}}^{(k)}$.
\STATE Solve Problem \eqref{eq_problem_sub2} to obtain $\mathbf{w}_{\mathrm{t}}^{\circ}$.
\STATE Normalize $\mathbf{w}_{\mathrm{t}}^{\circ}$ according to \eqref{eq_normal1} and obtain $\mathbf{w}_{\mathrm{t}}^{(k)}$.
\STATE Solve Problem \eqref{eq_problem_sub3} to obtain $\mathbf{w}_{\mathrm{S}}^{\circ}$.
\STATE Normalize $\mathbf{w}_{\mathrm{S}}^{\circ}$ according to \eqref{eq_normal1} and obtain $\mathbf{w}_{\mathrm{S}}^{(k)}$.
\STATE Solve Problem \eqref{eq_problem_sub4} to obtain $\mathbf{w}_{\mathrm{D}}^{\circ}$.
\STATE Normalize $\mathbf{w}_{\mathrm{D}}^{\circ}$ according to \eqref{eq_normal1} and obtain $\mathbf{w}_{\mathrm{D}}^{(k)}$.
\STATE Obtain $P_{\mathrm{S}}^{(k)}$ and $P_{\mathrm{V}}^{(k)}$ according to Theorem \ref{Theo_power}.
\STATE Calculate $R_{\mathrm{S2D}}^{(k)}$ according to \eqref{eq_Rate_S2D}.
\ENDWHILE
\STATE $\mathbf{w}_{\mathrm{r}}^{\mathrm{\star}}=\mathbf{w}_{\mathrm{r}}^{(k)} $, $\mathbf{w}_{\mathrm{t}}^{\mathrm{\star}}=\mathbf{w}_{\mathrm{t}}^{(k)}$, $P_{\mathrm{S}}^{\star}=P_{\mathrm{S}}^{(k)}$, and $P_{\mathrm{V}}^{\star}=P_{\mathrm{V}}^{(k)}$.
\RETURN ${x_{\mathrm{V}}^{\mathrm{\star}}, y_{\mathrm{V}}^{\mathrm{\star}}, h_{\mathrm{V}}^{\mathrm{\star}}, \mathbf{w}_{\mathrm{S}}^{\mathrm{\star}}, \mathbf{w}_{\mathrm{D}}^{\mathrm{\star}}, \mathbf{w}_{\mathrm{r}}^{\mathrm{\star}}, \mathbf{w}_{\mathrm{t}}^{\mathrm{\star}}}$, $P_{\mathrm{S}}^{\star}$, $P_{\mathrm{V}}^{\star}$.
\end{algorithmic}
\end{algorithm}

In the proposed joint positioning, beamforming, and power control algorithm, the FD-UAV positioning is determined first and entails a maximum computational complexity of $\mathcal{O}\left(K_{x}K_{y}K_{h}\right)$, where $K_{x}=\lceil\frac{x_{\mathrm{D}}}{\epsilon_{x}}\rceil$, $K_{y}=\lceil\frac{y_{\mathrm{D}}}{\epsilon_{y}}\rceil$, and $K_{h}=\lceil\frac{h_{\max}-h_{\min}}{\epsilon_{h}}\rceil$ are the maximum possible numbers of candidate coordinates for directions $x$, $y$, and $z$, respectively. The complexity of solving Problem \eqref{eq_problem_sub1} by using the interior point method and the normalization of the Rx-BFV is $\mathcal{O}\left({N_{\mathrm{r}}^{\mathrm{tot}}}^{3.5}\right)$ and $\mathcal{O}\left(N_{\mathrm{r}}^{\mathrm{tot}}\right)$, respectively \cite{boyd2004convex}. Then, the complexity of the joint beamforming and power control process from line 19 to 30 in Algorithm 1 is $\mathcal{O}\left({N_{\max}^{\mathrm{tot}}}^{3.5}\right)$, where $N_{\max}^{\mathrm{tot}}=\max \{{N_{\mathrm{r}}^{\mathrm{tot}}},{N_{\mathrm{t}}^{\mathrm{tot}}},{N_{\mathrm{S}}^{\mathrm{tot}}},
{N_{\mathrm{D}}^{\mathrm{tot}}}\}$. As a result, the overall computational complexity of Algorithm 1 is $\mathcal{O}\left(K_{x}K_{y}K_{h}+T{N_{\max}^{\mathrm{tot}}}^{3.5}\right)$, where $T$ is the maximum number of iterations of the AIS algorithm.

\section{Performance Evaluation}
In this section, we provide simulation results to evaluate the performance of the proposed joint positioning, beamforming, and power control scheme for mmWave FD-UAV relay systems.

\subsection{Simulation Setup and Benchmark Schemes}
We adopt the channel models in \eqref{Channel_S2V}, \eqref{Channel_V2D}, \eqref{Channel_S2D}, and \eqref{Channel_SI}, where the probabilities that an LoS path exists for the S2V and V2D channels are given by \eqref{prob_LoS_S2V} and \eqref{prob_LoS_V2D}, respectively. The number of NLoS components for the S2V, V2D, and S2D channels are assumed to be identical, i.e., $L_{\mathrm{S2V}}=L_{\mathrm{V2D}}=L_{\mathrm{S2D}}=L$. The adopted simulation parameter settings are provided in Table \ref{tab:para} \cite{Hourani2014MaxCov,Rappaport2015meas}, unless specified otherwise. Half-wavelength spacing UPAs are used at all nodes, and the Tx-UPA and Rx-UPA at the FD-UAV relay are parallel to each other with a distance of $10\lambda$ ($\approx$ 8 cm). For the proposed AIS algorithm, the lower bound for the SI suppression factor is set to $\eta=\min\left\{\frac{\sigma_{1}}{10\sqrt{P_{\mathrm{S}}^{\mathrm{tot}}}},\frac{\sigma_{2}}{10\sqrt{P_{\mathrm{V}}^{\mathrm{tot}}}}\right\}$, such that the interference power is in the same range as the noise power. Each simulation point is averaged over $10^3$ node distributions and channel realizations, where the DN is randomly distributed in a disk of radius 500 m, with the SN at its center.

\begin{table}[h]
\caption{Simulation Parameters}\label{tab:para}
\footnotesize
\begin{center}
\begin{tabular}{|c|c|c|}
  \hline
  % after \\: \hline or \cline{col1-col2} \cline{col3-col4} ...
 \textbf{Parameter}                        & \textbf{Description}                                        & \textbf{Value} \\
  \hline
  $h_{\min}$                                 & Minimum altitude of UAV                                   & 100 m \\
  \hline
  $h_{\max}$                                 & Maximum altitude of UAV                                   & 300 m \\
  \hline
  $P_{\mathrm{S}}^{\mathrm{tot}}$            & Maximum transmit power of the SN                          & 20 dBm \\
  \hline
  $P_{\mathrm{V}}^{\mathrm{tot}}$            & Maximum transmit power of the UAV                         & 20 dBm \\
  \hline
  $\sigma_{1}^{2}$                           & Power of the noise at the UAV                             & -110 dBm \\
  \hline
  $\sigma_{2}^{2}$                           & Power of the noise at the DN                              & -110 dBm \\
  \hline
  $f_c$ ($=c/\lambda$)                       & Carrier frequency                                         & 38 GHz \\
  \hline
  $\alpha_{\mathrm{LoS}}$                    & Path loss exponent for LoS paths                          & 1.9 \\
  \hline
  $\alpha_{\mathrm{NLoS}}$                   & Path loss exponent for NLoS paths                         & 3.3 \\
  \hline
  $L$                                        & Number of NLoS components                                 & 4 \\
  \hline
  $\sigma_{f}$                               & Standard deviation of shadow factor                       & $1/\sqrt{L}$ \\
  \hline
  $a$                                        & Environment parameter in \eqref{prob_LoS_S2V} and \eqref{prob_LoS_V2D}              & 11.95 \\
  \hline
  $b$                                        & Environment parameter in \eqref{prob_LoS_S2V} and \eqref{prob_LoS_V2D}              & 0.14 \\
  \hline
  $M_{\mathrm{S}} \times N_{\mathrm{S}}$     & Antenna array size at the SN                              & $4 \times 4$\\
  \hline
  $M_{\mathrm{D}} \times N_{\mathrm{D}}$     & Antenna array size at the DN                              & $4 \times 4$\\
  \hline
  $M_{\mathrm{t}} \times N_{\mathrm{t}}$     & Antenna array size of Tx-UPA at the UAV                   & $4 \times 4$\\
  \hline
  $M_{\mathrm{r}} \times N_{\mathrm{r}}$     & Antenna array size of Rx-UPA at the UAV                   & $4 \times 4$\\
  \hline
  $\epsilon_{x}$                             & Granularity for coordinate $x$                              & 1 m \\
  \hline
  $\epsilon_{y}$                             & Granularity for coordinate $y$                              & 1 m \\
  \hline
  $\epsilon_{h}$                             & Granularity for coordinate $h$                              & 1 m \\
  \hline
  $\kappa$                                   & Step size for AIS                                         & 10\\
  \hline
  $\epsilon_{r}$                                 & Threshold for convergence of Algorithm 1                  & 0.01 bps/Hz\\
  \hline
\end{tabular}
\end{center}
\end{table}

Two upper bounds for the achievable rate for mmWave FD-UAV relay systems are considered. The proposed approximate upper bound is obtained as the minimum of \eqref{eq_RateBound_S2V} and \eqref{eq_RateBound_V2D}, while the proposed strict upper bound is the minimum of \eqref{eq_RateBound2_S2V} and \eqref{eq_RateBound2_V2D}. For both upper bounds, the FD-UAV relay is assumed to be at the designed position $(x_{\mathrm{V}}^{\mathrm{\circ}}, y_{\mathrm{V}}^{\mathrm{\circ}}, h_{\mathrm{V}}^{\mathrm{\circ}})$. Furthermore, three benchmark schemes are used for comparison, namely ``RandPos \& AIS'', ``DesPos \& steer'', and ``DesPos \& OMP'', respectively. For the ``RandPos \& AIS'' scheme, the position of the FD-UAV relay is randomly selected from the feasible region of Problem \eqref{eq_problem}, and the proposed AIS algorithm is employed for beamforming. For the ``DesPos \& steer'' scheme, the designed position for the FD-UAV relay, i.e., $(x_{\mathrm{V}}^{\mathrm{\circ}}, y_{\mathrm{V}}^{\mathrm{\circ}}, h_{\mathrm{V}}^{\mathrm{\circ}})$ given by \eqref{AdjPosition}, is employed, and the steering vectors in \eqref{eq_BFV_SN} are used for beamforming. For the ``DesPos \& OMP'' scheme, the designed position for the FD-UAV relay is employed, and the BFVs are obtained by utilizing the OMP-based SI-cancellation precoding algorithm in \cite{zhang2019EEFD}, where the number of RF chains is denoted by $N_{\mathrm{RF}}$. For all benchmark schemes, the optimal transmit powers from Theorem \ref{Theo_power} are adopted at the SN and the FD-UAV relay.

\subsection{Simulation Results}
\begin{figure}[t]
\begin{center}
  \includegraphics[width=\figwidth cm]{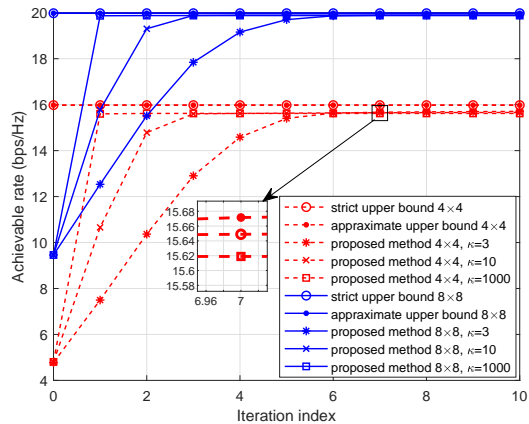}
  \caption{Evaluation of the convergence of the proposed Algorithm 1 for different values of $\kappa$.}
  \label{Fig:iteration_ka}
\end{center}
\end{figure}
\begin{figure}[t]
\begin{center}
  \includegraphics[width=\figwidth cm]{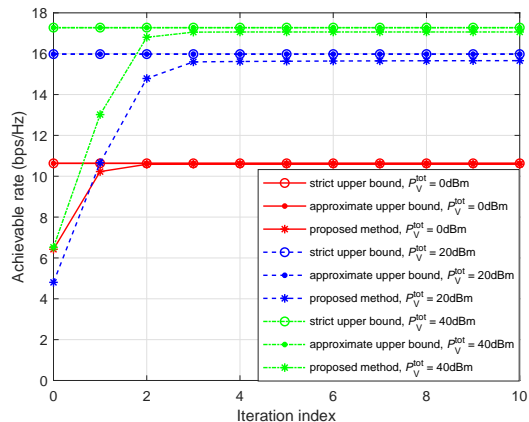}
  \caption{Evaluation of the convergence of the proposed Algorithm 1 for different maximum transmit powers of the FD-UAV relay.}
  \label{Fig:iteration_PUAV}
\end{center}
\end{figure}

First, in Fig. \ref{Fig:iteration_ka}, we evaluate the convergence of the proposed AIS beamforming method (Algorithm 1) for different step sizes for the reduction of the interference suppression factor (i.e., $\kappa$ in Algorithm 1). Identical sizes are adopted for the UPA at the SN, the UPA at the DN, and the Tx and Rx UPAs at the FD-UAV relay, i.e., $4\times4$ or $8\times8$. As can be observed, the proposed ASIS beamforming method converges very fast to a value close to the performance upper bound, and the approximate upper bound is very close to the strict upper bound. These results confirm the assumption of a pure LoS environment in Section III-A because the LoS path has much higher power compared to the NLoS paths. When the antenna array size is $4 \times 4$ at the FD-UAV relay, after convergence, the performance gap between the proposed method and the upper bound is no more than 0.3 bps/Hz, and this gap reduces to 0.1 bps/Hz when the antenna array size is $8 \times 8$. For larger numbers of antennas, there are more DoFs for minimization of the SI. Thus, the performance gap between the proposed method and the upper bound becomes smaller. The results in Fig. \ref{Fig:iteration_ka} demonstrate that the proposed method can achieve a near-upper-bound performance in terms of the achievable rate. In addition, the speed of convergence of the proposed AIS algorithm depends on the step size for the reduction of the suppression factor. For larger $\kappa$, the AIS algorithm converges faster. However, if $\kappa$ is chosen too large, for example, $\kappa \rightarrow +\infty$, the SI decreases too fast in the first iteration for designing $\mathbf{w}_{\mathrm{r}}^{(k)}$. As such, the effective channel gain of the S2V link may be much smaller than that of the V2D link, which negatively affects the achievable rate of the DN. Thus, to achieve a favorable tradeoff between the achievable rate and computational complexity, we set $\kappa =10$ for the following simulations.

Fig. \ref{Fig:iteration_PUAV} shows the convergence performance of the proposed AIS algorithm for different maximum transmit powers of the FD-UAV relay. For all considered cases, the proposed algorithm converges to a near-upper-bound achievable rate within few iterations, where all curves reach steady state after 4 iterations. Particularly, as the maximum transmit power at the FD-UAV relay increases, the number of the iterations required for convergence increases. The reason is that a higher transmit power of the UAV causes more SI, and thus more iterations are required to successively reduce the SI.

\begin{figure}[t]
\begin{center}
  \includegraphics[width=\figwidth cm]{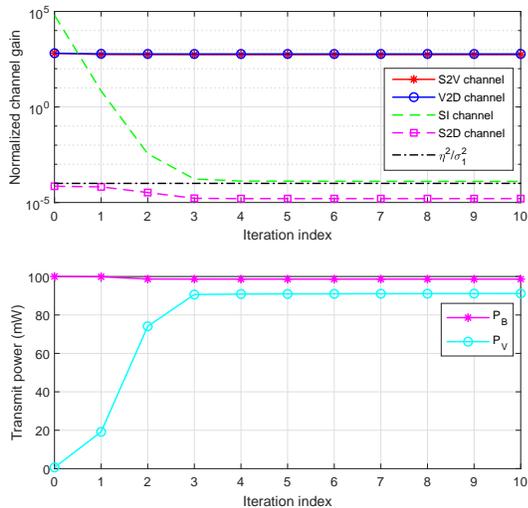}
  \caption{Normalized channel gains and transmit powers versus iteration index.}
  \label{Fig:Gain_Power}
\end{center}
\end{figure}

To shed more light on the properties of Algorithm 1, in Fig. \ref{Fig:Gain_Power}, we show the change of the channel gains and transmit powers during the iterations. In particular, we show the normalized channel gains, which are the ratios of the effective channel gains and the noise power in \eqref{eq_Rate_S2V} and \eqref{eq_Rate_V2D}, i.e., $\left | \mathbf{w}_{\mathrm{r}}^{\rm{H}}\mathbf{H}_{\mathrm{S2V}}\mathbf{w}_{\mathrm{S}} \right |^{2}/\sigma_{1}^{2}$, $\left | \mathbf{w}_{\mathrm{r}}^{\rm{H}}\mathbf{H}_{\mathrm{SI}}\mathbf{w}_{\mathrm{t}} \right |^{2}/\sigma_{1}^{2}$, $\left | \mathbf{w}_{\mathrm{D}}^{\rm{H}}\mathbf{H}_{\mathrm{V2D}}\mathbf{w}_{\mathrm{t}} \right |^{2}/\sigma_{1}^{2}$, and $\left | \mathbf{w}_{\mathrm{D}}^{\rm{H}}\mathbf{H}_{\mathrm{S2D}}\mathbf{w}_{\mathrm{S}} \right |^{2}/\sigma_{1}^{2}$. As can be observed, the channel gain of the SI channel decreases fast and converges to the lower bound $\eta^2/\sigma_{1}^{2}$, since the SI suppression factor is reduced in each iteration in \eqref{eq_problem_sub1} and \eqref{eq_problem_sub2}. The channel gain of the S2D channel is always lower than that of the SI channel because of the long transmission distance and the blockage of the LoS link between SN and DN. Besides, the channel gains of the S2V and V2D links remain almost unchanged during the iterations, which confirms the rational behind the proposed AIS beamforming algorithm. This is also the reason for why the achievable rate of the proposed scheme can approach the performance upper bound. For the variation of transmit powers, during the first iteration, the transmit power of the FD-UAV relay is very low, while the SN transmits with the maximal power. This is because the S2V link suffers from high SI for the initially chosen BFVs, and thus the FD-UAV reduces the transmit power to decrease the SI. After several iterations, the effective channel gain of the SI channel becomes lower, and thus the FD-UAV relay can increase its transmit power to improve the achievable rate of the V2D link.

\begin{figure}[t]
\begin{center}
  \includegraphics[width=\figwidth cm]{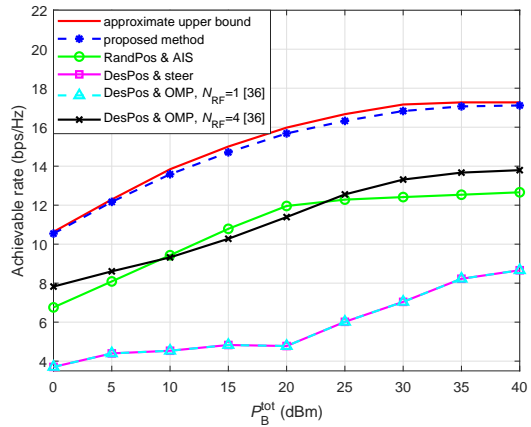}
  \caption{Achievable rates of different methods versus SN transmit powers.}
  \label{Fig:rate_PBS}
\end{center}
\end{figure}
\begin{figure}[t]
\begin{center}
  \includegraphics[width=\figwidth cm]{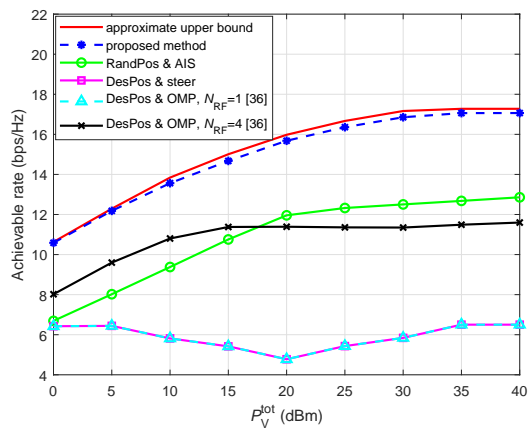}
  \caption{Achievable rates of different methods versus FD-UAV relay transmit powers.}
  \label{Fig:rate_PUAV}
\end{center}
\end{figure}

Fig. \ref{Fig:rate_PBS} compares the achievable rate performance of different methods as a function of the SN transmit power. As can be observed, the proposed joint position, beamforming, and power control method achieves a performance very close to the performance upper bound, and outperforms all benchmark schemes. In addition, as $P^{\mathrm{tot}}_{\mathrm{S}}$ increases, the speed of the increase of the achievable rate becomes smaller. The reason for this behavior is as follows. According to Theorem \ref{Theo_posi}, the conditional optimal position of the FD-UAV relay moves towards the DN as the transmit power of the SN increases. When $P^{\mathrm{tot}}_{\mathrm{S}}$ is sufficiently large, the conditional optimal position of the FD-UAV relay is right above the DN, and the achievable rate of the V2D link cannot increase anymore. In other words, the overall achievable rate is limited by the rate of the V2D link. We also observe that for one RF chain, the OMP-based SI-cancellation precoding algorithm in \cite{zhang2019EEFD} yields a similar performance as the steering vector-based beamforming scheme. When the number of RF chains increases, more SI can be mitigated in the digital beamforming domain, and the performance of the ``DesPos \& OMP'' scheme improves \cite{zhang2019EEFD}.

Fig. \ref{Fig:rate_PUAV} compares the achievable rate performance of different methods as a function of the FD-UAV relay transmit power. The proposed scheme outperforms again all benchmark schemes. As $P^{\mathrm{tot}}_{\mathrm{V}}$ increases, the achievable rate of the proposed method improves, but the rate of improvement decreases. The reason for this is that the position of the FD-UAV relay moves towards the SN as $P_{\mathrm{V}}$ increases. When the transmit power of the FD-UAV relay is sufficiently large, the conditional optimal position of the FD-UAV relay is right above the SN, and the achievable rate of the S2V link cannot be further improved and limits the overall performance. In addition, as the transmit power of the FD-UAV relay increases, the achievable rate of the ``DesPos \& steer'' scheme remains low because the SI is high at the FD-UAV relay if the steering vectors are employed for beamforming. The results in Figs. \ref{Fig:rate_PBS} and \ref{Fig:rate_PUAV} indicate that both the UAV positioning and the BFVs have a significant impact on the achievable-rate performance of mmWave FD-UAV relay systems.

\begin{figure}[t]
\begin{center}
  \includegraphics[width=\figwidth cm]{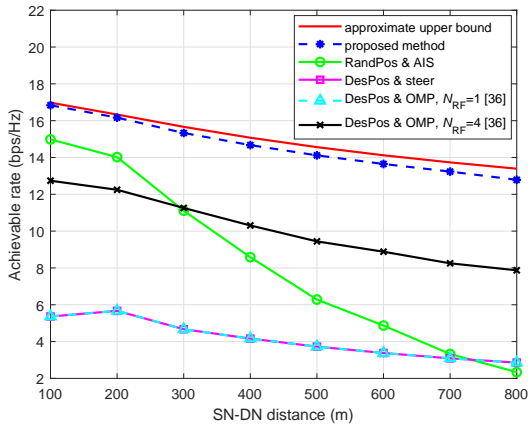}
  \caption{Achievable rates of different methods versus SN-DN distance.}
  \label{Fig:rate_D}
\end{center}
\end{figure}
\begin{figure}[t]
\begin{center}
  \includegraphics[width=\figwidth cm]{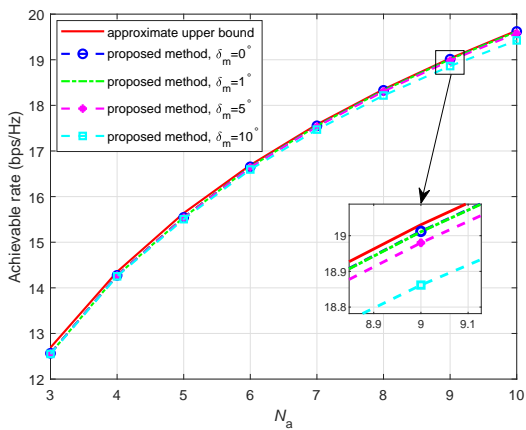}
  \caption{Achievable rates of different methods versus antenna array sizes for $M_{\tau}=N_{\tau}=N_{\text{a}}$ and $\tau=\{\mathrm{t},\mathrm{r},\mathrm{S},\mathrm{D}\}$.}
  \label{Fig:rate_MN}
\end{center}
\end{figure}

Fig. \ref{Fig:rate_D} compares the achievable rate of different methods as a function of the SN-DN distance. For each point on the horizontal axis, the DN is randomly distributed on a circle with the SN at its center and a fixed radius, i.e., the SN-DN distance. As can be observed, the achievable rates for the five considered schemes all decrease as the distance increases because the path loss increases. In particular for the ``RandPos \& AIS'' scheme, the achievable rate decreases rapidly with increasing distance. The reason for this behaviour is that, for larger SN-DN distances, the range of possible UAV positions increases, and the randomly deployed UAV may be further from the conditional optimal position.

Fig. \ref{Fig:rate_MN} compares the achievable rate of different methods as a function of the antenna array size for $M_{\tau}=N_{\tau}=N_{\text{a}}$ and $\tau=\{\mathrm{t},\mathrm{r},\mathrm{S},\mathrm{D}\}$. As the antenna array size increases, the achievable rate of the proposed joint positioning, beamforming, and power control method also increases because higher array gains can be obtained and more DoFs are available for suppression of the SI. However, due to the jitter of the UAV, the elevation angles and the azimuth angles of the air-to-ground channels may change rapidly, which results in beam misalignment. To evaluate the impact of beam misalignment, we model the real AoDs/AoAs of the S2V link and the V2D link as uniformly distributed random variables with fixed means and deviation $\delta_{\mathrm{m}}$, i.e., $\bar{\theta}_{\tau}^{(\ell)} \in \left[\theta_{\tau}^{(\ell)}-\delta_{\mathrm{m}}/2,\theta_{\tau}^{(\ell)}+\delta_{\mathrm{m}}/2\right]$ and $\bar{\phi}_{\tau}^{(\ell)} \in \left[\phi_{\tau}^{(\ell)}-\delta_{\mathrm{m}}/2,\phi_{\tau}^{(\ell)}+\delta_{\mathrm{m}}/2\right]$ for $\tau=\{\mathrm{t},\mathrm{r},\mathrm{S},\mathrm{D}\}$. The BFVs and power control are designed based on the estimated AoDs and AoAs ($\theta_{\tau}^{(\ell)}$ and $\phi_{\tau}^{(\ell)}$), while the achievable rates are calculated based on the real AoDs and AoAs ($\bar{\theta}_{\tau}^{(\ell)}$ and $\bar{\phi}_{\tau}^{(\ell)}$). As can be observed from Fig. \ref{Fig:rate_MN}, the achievable rates are very close to the upper bound for $\delta_{\mathrm{m}}=1^{\circ}$, $\delta_{\mathrm{m}}=5^{\circ}$, and $\delta_{\mathrm{m}}=10^{\circ}$. The reason is as follows. According to the array theory, the half-power beamwidth for a linear phased array employing steering vectors is $\Theta=2\left|\theta_{m}-\theta_{h}\right|$, where $\theta_{m}=\cos^{-1}\left(\frac{\beta \lambda}{2\pi d}\right)$ is the angle maximizing the array gain, $\theta_{h}=\cos^{-1}\left[\frac{\lambda}{2\pi d}(-\beta \pm \frac{2.782}{N})\right]$ is the 3-dB point for the array gain, $\beta$ is the difference in phase excitation between the antenna elements, and $N$ is the array size \cite{balanis2016antenna}. For $N=9$, $\beta=0$, and $d/\lambda=1/2$, the half-power beamwidth is $\Theta \approx 11.3^{\circ}$. Thus, beam misalignments with deviations not exceeding $10^{\circ}$ have little impact on the achievable rate. For larger array sizes, the beamwidth decreases and the impact of beam misalignment becomes more significant. The results in Fig. \ref{Fig:rate_MN} demonstrate the robustness of the proposed AIS beamforming algorithm with respect to beam misalignment.

\section{Conclusion}
In this paper, we proposed to employ an FD-UAV relay to improve the achievable rate of a mmWave communication system, where the SN, DN, and FD-UAV relay are all equipped with UPAs and use directional beams to overcome the high path loss of mmWave signals. Analog beamforming was utilized to mitigate the SI at the FD-UAV relay. We formulated a joint optimization problem for the UAV positioning, analog beamforming, and power control for maximization of the minimum of the achievable rates of the S2V and V2D links. To solve this highly non-convex, highly coupled, and high-dimensional problem, we first obtained the conditional optimal position of the FD-UAV relay for maximization of an approximate upper bound for the achievable rate, under the assumption of an LoS environment and ideal beamforming. Then, the UAV was deployed at the position which was closest to the conditional optimal position and yielded LoS paths for both the S2V and the V2D links. Subsequently, we developed an iterative algorithm for joint optimization of the BFVs and the power control variables. In each iteration, the BFVs were optimized  for maximization of the beam gains of the target signals and successive reduction of the interference, and the optimal power control variables were updated in closed form. Simulation results demonstrated that the proposed joint positioning, beamforming, and power control method for mmWave FD-UAV relay system can closely approach a performance upper bound in terms of the achievable rate and significantly outperforms three benchmark schemes.

\appendices
\section{Proof of Theorem \ref{Theo_posi}}
Based on \eqref{eq_RateBound_S2V} and \eqref{eq_RateBound_V2D}, we find that to maximize the achievable rate, the FD-UAV relay should always be deployed on the line segment between the SN and the DN with the minimum altitude. Otherwise, the S2V and V2D distances would both increase, which results in an additional propagation loss. Thus, we can set the coordinates of the UAV as $\left(x_{\mathrm{V}},y_{\mathrm{V}},h_{\mathrm{V}}\right)=\left(\rho x_{\mathrm{D}},\rho y_{\mathrm{D}}, h_{\min}\right)$, where $0\leq \rho \leq 1$.

Notice that the objective in Problem \eqref{eq_problem} is to maximize the minimal rate of the S2V and V2D links. If $\frac{N_{\mathrm{S}}^{\mathrm{tot}} N_{\mathrm{r}}^{\mathrm{tot}} P_{\mathrm{S}}^{\mathrm{tot}}\sigma_{2}^{2}}{N_{\mathrm{t}}^{\mathrm{tot}} N_{\mathrm{D}}^{\mathrm{tot}} P_{\mathrm{V}}^{\mathrm{tot}}\sigma_{1}^{2}} \leq \frac{h_{\min}^{\alpha_{\mathrm{LoS}}}}{\left(x_{\mathrm{D}}^{2}+y_{\mathrm{D}}^{2}+h_{\min}^{2}\right)^{\frac{\alpha_{\mathrm{LoS}}}{2}}}$, we have $\bar{R}_{\mathrm{S2V}} \leq \bar{R}_{\mathrm{V2D}}$ for any $\rho \in [0,1]$. Thus, the FD-UAV relay should be deployed right at the SN to maximize the minimal rate, i.e., $\bar{R}_{\mathrm{S2V}}$. As a result, the optimal coordinates of the UAV are obtained for $\rho^{\star}=0$.

Similarly, if $\frac{N_{\mathrm{S}}^{\mathrm{tot}} N_{\mathrm{r}}^{\mathrm{tot}} P_{\mathrm{S}}^{\mathrm{tot}}\sigma_{2}^{2}}{N_{\mathrm{t}}^{\mathrm{tot}} N_{\mathrm{D}}^{\mathrm{tot}} P_{\mathrm{V}}^{\mathrm{tot}}\sigma_{1}^{2}} \geq \frac{\left(x_{\mathrm{D}}^{2}+y_{\mathrm{D}}^{2}+h_{\min}^{2}\right)^{\frac{\alpha_{\mathrm{LoS}}}{2}}}{h_{\min}^{\alpha_{\mathrm{LoS}}}}$, we have $\bar{R}_{\mathrm{S2V}} \geq \bar{R}_{\mathrm{V2D}}$ for any $\rho \in [0,1]$. Thus, the FD-UAV relay should be deployed right at the DN to maximize the minimal rate, i.e., $\bar{R}_{\mathrm{V2D}}$. As a result, the optimal coordinates of the UAV are obtained for $\rho^{\star}=1$.

For the case $ \frac{h_{\min}^{\alpha_{\mathrm{LoS}}}}{\left(x_{\mathrm{D}}^{2}+y_{\mathrm{D}}^{2}+h_{\min}^{2}\right)^{\frac{\alpha_{\mathrm{LoS}}}{2}}}< \frac{N_{\mathrm{S}}^{\mathrm{tot}} N_{\mathrm{r}}^{\mathrm{tot}} P_{\mathrm{S}}^{\mathrm{tot}}\sigma_{2}^{2}}{N_{\mathrm{t}}^{\mathrm{tot}} N_{\mathrm{D}}^{\mathrm{tot}} P_{\mathrm{V}}^{\mathrm{tot}}\sigma_{1}^{2}} <\frac{\left(x_{\mathrm{D}}^{2}+y_{\mathrm{D}}^{2}+h_{\min}^{2}\right)^{\frac{\alpha_{\mathrm{LoS}}}{2}}}{h_{\min}^{\alpha_{\mathrm{LoS}}}} $, the relative size of $\bar{R}_{\mathrm{S2V}}$ and $\bar{R}_{\mathrm{V2D}}$ depends on the value of $\rho$. It is easy to verify that $\bar{R}_{\mathrm{S2V}}$ is decreasing in $\rho$, while $\bar{R}_{\mathrm{V2D}}$ is increasing in $\rho$. Thus, the minimal rate is maximized if and only if $\bar{R}_{\mathrm{S2V}}=\bar{R}_{\mathrm{V2D}}$. This is an equation for variable $\rho$. When $\frac{N_{\mathrm{S}}^{\mathrm{tot}} N_{\mathrm{r}}^{\mathrm{tot}} P_{\mathrm{S}}^{\mathrm{tot}}\sigma_{2}^{2}}{N_{\mathrm{t}}^{\mathrm{tot}} N_{\mathrm{D}}^{\mathrm{tot}} P_{\mathrm{V}}^{\mathrm{tot}}\sigma_{1}^{2}}=1$, we obtain a linear equation with solution $\rho^{\star}=\frac{1}{2}$. For the other cases, we have a quadratic equation with solution $\rho^{\star}=\frac{-b'-\sqrt{b'^2-4a'c'}}{2a'}$ as shown in \eqref{OptPosition}, which is the unique solution located in the interval $[0,1]$. This completes the proof.

\section{Proof of Theorem \ref{Theo_modulus}}
For notational simplicity, we employ the definitions $\mathbf{h}_{\mathrm{S2V}}=\mathbf{H}_{\mathrm{S2V}}\mathbf{w}_{\mathrm{S}}^{(k-1)}$ and $\mathbf{h}_{\mathrm{SI}}=\mathbf{H}_{\mathrm{SI}}\mathbf{w}_{\mathrm{t}}^{(k-1)}$ in \eqref{eq_problem_sub1}.
Note that Problems \eqref{eq_problem_sub1}, \eqref{eq_problem_sub2}, \eqref{eq_problem_sub3}, and \eqref{eq_problem_sub4} have a similar form, Theorem \ref{Theo_modulus} holds for all four problems. We only present the proof for Problem \eqref{eq_problem_sub1}. A similar proof can be provided for the other problems.

Let $\mathbf{w}_{\mathrm{r}}^{\circ}$ denote the optimal solution of Problem \eqref{eq_problem_sub1}, which satisfies
\begin{equation}
\left\{
\begin{aligned}
&\mathbf{w}_{\mathrm{r}}^{\circ\mathrm{H}} \mathbf{h}_{\mathrm{S2V}} = l_{1}e^{j\omega_{1}} \\
&\mathbf{w}_{\mathrm{r}}^{\circ\mathrm{H}}\mathbf{h}_{\mathrm{SI}} = l_{2}e^{j\omega_{2}},
\end{aligned}
\right.
\end{equation}
where $l_{1}$ and $\omega_{1}$ denote the modulus and phase of $\mathbf{w}_{\mathrm{r}}^{\circ\mathrm{H}} \mathbf{h}_{\mathrm{S2V}}$, respectively. $l_{2}$ and $\omega_{2}$ denote the modulus and phase of $\mathbf{w}_{\mathrm{r}}^{\circ\mathrm{H}}\mathbf{h}_{\mathrm{SI}}$, respectively. According to the formulation of Problem \eqref{eq_problem_sub1}, we know that $l_{2} \leq \eta_{1}^{(k)}$ and $l_{1}$ is the maximum of the objective function.

Note that $N_{\mathrm{r}}^{\mathrm{tot}}\geq 2$ is an implicit precondition for beamforming at the mmWave FD-UAV relay. Assume that $\mathbf{w}_{\mathrm{r}}^{\circ}$ has two elements which do not satisfy the CM constraint, i.e., $\left |[\mathbf{w}_{\mathrm{r}}^{\circ}]_{\pi_1}\right |<\frac{1}{\sqrt{N_{\mathrm{r}}^{\mathrm{tot}}}}$ and $\left |[\mathbf{w}_{\mathrm{r}}^{\circ}]_{\pi_2}\right |<\frac{1}{\sqrt{N_{\mathrm{r}}^{\mathrm{tot}}}}$, where $\{\pi_{n}\}\subseteqq\{1,2,\cdots,N_{\mathrm{r}}^{\mathrm{tot}}\}$ is the sequence of the BFV's indices. Furthermore, we keep $[\mathbf{w}_{\mathrm{r}}]_{\pi_n}=[\mathbf{w}_{\mathrm{r}}^{\circ}]_{\pi_n}$ fixed for $n=3,4,\cdots,N_{\mathrm{r}}^{\mathrm{tot}}$, and construct a new solution by adjusting $[\mathbf{w}_{\mathrm{r}}]_{\pi_1}$ and $[\mathbf{w}_{\mathrm{r}}]_{\pi_2}$, which can be obtained by solving the following problem:
\begin{equation}\label{eq_problem_sub1_pro}
\begin{aligned}
\mathop{\mathrm{Maximize}}\limits_{[\mathbf{w}_{\mathrm{r}}]_{\pi_1},[\mathbf{w}_{\mathrm{r}}]_{\pi_2}}~~~~~
&\left|\mathbf{w}_{\mathrm{r}}^{\mathrm{H}} \mathbf{h}_{\mathrm{S2V}}\right|\\
\mathrm{Subject~ to}~~~~~ &\mathbf{w}_{\mathrm{r}}^{\mathrm{H}}\mathbf{h}_{\mathrm{SI}} = l_{2}e^{j\omega_{2}}, \\
&\left|\left[ \mathbf{w}_{\mathrm{r}}\right]_{\pi_1}\right| \leq \frac{1}{\sqrt{N_{\mathrm{r}}^{\mathrm{tot}}}},\\
&\left|\left[ \mathbf{w}_{\mathrm{r}}\right]_{\pi_2}\right| \leq \frac{1}{\sqrt{N_{\mathrm{r}}^{\mathrm{tot}}}}.
\end{aligned}
\end{equation}
Based on the assumption that $\mathbf{w}_{\mathrm{r}}^{\circ}$ is the optimal solution of Problem \eqref{eq_problem_sub1}, we know that $\mathbf{w}_{\mathrm{r}}^{\circ}$ is also the optimal solution of Problem \eqref{eq_problem_sub1_pro}, because the feasible region of Problem \eqref{eq_problem_sub1_pro} is a subset of that of Problem \eqref{eq_problem_sub1}.

Next, we provide the following two lemmas to illustrate a key property of the solution, for $\frac{\left[\mathbf{h}_{\mathrm{S2V}}\right]_{\pi_1}}{\left[\mathbf{h}_{\mathrm{S2V}}\right]_{\pi_2}} \neq \frac{\left[\mathbf{h}_{\mathrm{SI}}\right]_{\pi_1}}{\left[\mathbf{h}_{\mathrm{SI}}\right]_{\pi_2}}$ and $\frac{\left[\mathbf{h}_{\mathrm{S2V}}\right]_{\pi_1}}{\left[\mathbf{h}_{\mathrm{S2V}}\right]_{\pi_2}} = \frac{\left[\mathbf{h}_{\mathrm{SI}}\right]_{\pi_1}}{\left[\mathbf{h}_{\mathrm{SI}}\right]_{\pi_2}}$, respectively.

\begin{lemma}
If $\frac{\left[\mathbf{h}_{\mathrm{S2V}}\right]_{\pi_1}}{\left[\mathbf{h}_{\mathrm{S2V}}\right]_{\pi_2}} \neq \frac{\left[\mathbf{h}_{\mathrm{SI}}\right]_{\pi_1}}{\left[\mathbf{h}_{\mathrm{SI}}\right]_{\pi_2}}$ holds, the assumption $\left |[\mathbf{w}_{\mathrm{r}}^{\circ}]_{\pi_1}\right |<\frac{1}{\sqrt{N_{\mathrm{r}}^{\mathrm{tot}}}}$ and $\left |[\mathbf{w}_{\mathrm{r}}^{\circ}]_{\pi_2}\right |<\frac{1}{\sqrt{N_{\mathrm{r}}^{\mathrm{tot}}}}$ cannot hold.
\end{lemma}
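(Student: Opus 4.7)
The plan is to restrict attention to the 2-variable subproblem \eqref{eq_problem_sub1_pro} and exhibit a feasible perturbation of $(u_1,u_2):=([\mathbf{w}_{\mathrm{r}}]_{\pi_1}^{*},[\mathbf{w}_{\mathrm{r}}]_{\pi_2}^{*})$ that strictly increases the objective, contradicting the optimality of $\mathbf{w}_{\mathrm{r}}^{\circ}$ whenever both moduli are strictly below the CM bound. First I would rewrite the equality constraint $\mathbf{w}_{\mathrm{r}}^{\mathrm{H}}\mathbf{h}_{\mathrm{SI}}=l_{2}e^{j\omega_{2}}$ as $[\mathbf{h}_{\mathrm{SI}}]_{\pi_1} u_1 + [\mathbf{h}_{\mathrm{SI}}]_{\pi_2} u_2 = c$ for a constant $c$ depending on the fixed entries, and similarly rewrite the objective as $|[\mathbf{h}_{\mathrm{S2V}}]_{\pi_1} u_1 + [\mathbf{h}_{\mathrm{S2V}}]_{\pi_2} u_2 + c'|$. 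The equality constraint is a single complex-linear equation in $(u_1,u_2)\in\mathbb{C}^2$, so its solution set is an affine complex line; I would parametrize it as $(u_1,u_2)=(u_1^{\circ}+\alpha_1 t,\,u_2^{\circ}+\alpha_2 t)$ with $t\in\mathbb{C}$ and $(\alpha_1,\alpha_2)$ spanning the kernel of $(u_1,u_2)\mapsto[\mathbf{h}_{\mathrm{SI}}]_{\pi_1}u_1+[\mathbf{h}_{\mathrm{SI}}]_{\pi_2}u_2$, e.g.\ $(\alpha_1,\alpha_2)=([\mathbf{h}_{\mathrm{SI}}]_{\pi_2},-[\mathbf{h}_{\mathrm{SI}}]_{\pi_1})$.

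Next I would substitute this parametrization into the objective, which yields $|At+B|$ with $A=[\mathbf{h}_{\mathrm{S2V}}]_{\pi_1}[\mathbf{h}_{\mathrm{SI}}]_{\pi_2}-[\mathbf{h}_{\mathrm{S2V}}]_{\pi_2}[\mathbf{h}_{\mathrm{SI}}]_{\pi_1}$ and $B$ constant. The hypothesis $[\mathbf{h}_{\mathrm{S2V}}]_{\pi_1}/[\mathbf{h}_{\mathrm{S2V}}]_{\pi_2}\ne[\mathbf{h}_{\mathrm{SI}}]_{\pi_1}/[\mathbf{h}_{\mathrm{SI}}]_{\pi_2}$ is exactly the statement that $A\neq 0$, so the objective is a genuinely non-constant (and in fact convex) function of $t$.

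Now I would exploit the strict-interior assumption $|u_i^{\circ}|<1/\sqrt{N_{\mathrm{r}}^{\mathrm{tot}}}$ for $i=1,2$: there exists $\delta>0$ such that every $t$ with $|t|<\delta$ leaves both box constraints strictly satisfied, so a full complex neighborhood of $t=0$ is feasible. Because $|At+B|$ is non-constant with $A\ne 0$, at $t=0$ either (i) $A\cdot 0+B\ne 0$, in which case the directional derivative of $|At+B|$ along $t=\varepsilon\,\overline{A(0+B)/|A\cdot 0+B|}$ is $|A|>0$, or (ii) $B=0$, in which case $|A t+B|=|A||t|$ grows linearly as $|t|$ moves away from $0$. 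In either case I can pick $t$ with $|t|<\delta$ making $|At+B|>|B|$, producing a strictly better feasible point and contradicting the optimality of $(u_1^{\circ},u_2^{\circ})$ in \eqref{eq_problem_sub1_pro}, hence of $\mathbf{w}_{\mathrm{r}}^{\circ}$ in \eqref{eq_problem_sub1}.

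The main obstacle I anticipate is purely notational rather than conceptual, namely bookkeeping the conjugations carefully so that the equality constraint really does produce a complex-affine line in $(u_1,u_2)$ and so that the coefficient $A$ above is exactly the $2\times 2$ determinant that encodes the non-proportionality hypothesis; once that bookkeeping is clean, the contradiction argument is immediate from the convexity (in fact non-constancy) of $|At+B|$ on an open neighborhood of the current point. I would also add one short remark explicitly recording that $A\neq 0$ is equivalent to the stated ratio hypothesis, so that the proof transitions naturally into the companion lemma handling the equal-ratio case.
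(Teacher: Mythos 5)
Your proof is correct and follows essentially the same route as the paper's: both arguments restrict to the affine line cut out by the equality constraint $\mathbf{w}_{\mathrm{r}}^{\mathrm{H}}\mathbf{h}_{\mathrm{SI}}=l_{2}e^{j\omega_{2}}$, identify the resulting linear coefficient of the objective with the $2\times2$ determinant that the ratio hypothesis forces to be nonzero, and contradict optimality via a small feasible perturbation made possible by the strict-interior assumption. The only cosmetic difference is that you select the explicit ascent direction of $\left|At+B\right|$, whereas the paper perturbs $[\mathbf{w}_{\mathrm{r}}]_{\pi_1}$ by $\pm\delta$ and sums the two resulting optimality inequalities to reach $2|\hat{k}\delta|^{2}\leq 0$.
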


\begin{proof}
If $\frac{\left[\mathbf{h}_{\mathrm{S2V}}\right]_{\pi_1}}{\left[\mathbf{h}_{\mathrm{S2V}}\right]_{\pi_2}} \neq \frac{\left[\mathbf{h}_{\mathrm{SI}}\right]_{\pi_1}}{\left[\mathbf{h}_{\mathrm{SI}}\right]_{\pi_2}}$ holds, according to the first constraint in Problem \eqref{eq_problem_sub1_pro}, we can express $[\mathbf{w}_{\mathrm{r}}]_{\pi_2}$ as a function of $[\mathbf{w}_{\mathrm{r}}]_{\pi_1}$, i.e.,
\begin{equation}\label{eq_w2}
\begin{aligned}
\left[\mathbf{w}_{\mathrm{r}}\right]_{\pi_2}^{*}
&=\frac{l_{2}e^{j\omega_{2}}-\sum \limits_{n=3}^{N_{\mathrm{r}}^{\mathrm{tot}}} [\mathbf{w}_{\mathrm{r}}^{\circ}]_{\pi_n}^{*}\left[\mathbf{h}_{\mathrm{SI}}\right]_{\pi_n}}
{\left[\mathbf{h}_{\mathrm{SI}}\right]_{\pi_2}}
-[\mathbf{w}_{\mathrm{r}}]_{\pi_1}^{*}
\frac{\left[\mathbf{h}_{\mathrm{SI}}\right]_{\pi_1}}{\left[\mathbf{h}_{\mathrm{SI}}\right]_{\pi_2}}\\
&\triangleq f_{1}\left([\mathbf{w}_{\mathrm{r}}]_{\pi_1}\right).
\end{aligned}
\end{equation}
Substituting \eqref{eq_w2} into the objective function of Problem \eqref{eq_problem_sub1_pro}, we obtain
\begin{equation}\label{eq_obj_w1}\small
\begin{aligned}
&\mathbf{w}_{\mathrm{r}}^{\mathrm{H}} \mathbf{h}_{\mathrm{S2V}}\\
=&[\mathbf{w}_{\mathrm{r}}]_{\pi_1}^{*}\left[\mathbf{h}_{\mathrm{S2V}}\right]_{\pi_1}+[\mathbf{w}_{\mathrm{r}}]_{\pi_2}^{*}\left[\mathbf{h}_{\mathrm{S2V}}\right]_{\pi_2}
+\sum \limits_{n=3}^{N_{\mathrm{r}}^{\mathrm{tot}}} [\mathbf{w}_{\mathrm{r}}^{\circ}]_{\pi_n}^{*}\left[\mathbf{h}_{\mathrm{S2V}}\right]_{\pi_n}\\
=&[\mathbf{w}_{\mathrm{r}}]_{\pi_1}^{*}\underbrace{\left(\left[\mathbf{h}_{\mathrm{S2V}}\right]_{\pi_1}-\left[\mathbf{h}_{\mathrm{S2V}}\right]_{\pi_2}
\frac{\left[\mathbf{h}_{\mathrm{SI}}\right]_{\pi_1}}{\left[\mathbf{h}_{\mathrm{SI}}\right]_{\pi_2}}\right)} \limits_{=\hat{k}}\\
&~+\underbrace{\left[\mathbf{h}_{\mathrm{S2V}}\right]_{\pi_2}\frac{l_{2}e^{j\omega_{2}}-\sum \limits_{n=3}^{N_{\mathrm{r}}^{\mathrm{tot}}} [\mathbf{w}_{\mathrm{r}}^{\circ}]_{\pi_n}^{*}\left[\mathbf{h}_{\mathrm{SI}}\right]_{\pi_n}}
{\left[\mathbf{h}_{\mathrm{SI}}\right]_{\pi_2}}
\sum \limits_{n=3}^{N_{\mathrm{r}}^{\mathrm{tot}}} [\mathbf{w}_{\mathrm{r}}^{\circ}]_{\pi_n}^{*}\left[\mathbf{h}_{\mathrm{S2V}}\right]_{\pi_n}}
\limits_{=\hat{b}}\\
\triangleq & \hat{k}[\mathbf{w}_{\mathrm{r}}]_{\pi_1}^{*}+\hat{b} \triangleq f_{2}\left([\mathbf{w}_{\mathrm{r}}]_{\pi_1}\right).
\end{aligned}
\end{equation}

Note that $\frac{\left[\mathbf{h}_{\mathrm{S2V}}\right]_{\pi_1}}{\left[\mathbf{h}_{\mathrm{S2V}}\right]_{\pi_2}} \neq \frac{\left[\mathbf{h}_{\mathrm{SI}}\right]_{\pi_1}}{\left[\mathbf{h}_{\mathrm{SI}}\right]_{\pi_2}}$ holds in Lemma 1. Thus, we have $\hat{k}\neq 0$ in \eqref{eq_obj_w1}. Because of the assumption $\left |[\mathbf{w}_{\mathrm{r}}^{\circ}]_{\pi_1}\right |<\frac{1}{\sqrt{N_{\mathrm{r}}^{\mathrm{tot}}}}$ and $\left |[\mathbf{w}_{\mathrm{r}}^{\circ}]_{\pi_2}\right |<\frac{1}{\sqrt{N_{\mathrm{r}}^{\mathrm{tot}}}}$, we can always find a real number $\delta$, which is positive and small enough to satisfy
\begin{equation}
\left\{
\begin{aligned}
&\left |[\mathbf{w}_{\mathrm{r}}^{\circ}]_{\pi_1} \pm \delta\right |<\frac{1}{\sqrt{N_{\mathrm{r}}^{\mathrm{tot}}}},\\
&\left |f_{1}\left([\mathbf{w}_{\mathrm{r}}^{\circ}]_{\pi_1} \pm \delta\right)\right |<\frac{1}{\sqrt{N_{\mathrm{r}}^{\mathrm{tot}}}}.
\end{aligned}
\right.
\end{equation}
This means that $\left([\mathbf{w}_{\mathrm{r}}^{\circ}]_{\pi_1}+\delta\right)$ and $\left([\mathbf{w}_{\mathrm{r}}^{\circ}]_{\pi_1}-\delta\right)$ are both located in the feasible region of Problem \eqref{eq_problem_sub1_pro}. Since $[\mathbf{w}_{\mathrm{r}}^{\circ}]_{\pi_1}$ is the optimal solution of Problem \eqref{eq_problem_sub1_pro}, the objective function at $[\mathbf{w}_{\mathrm{r}}^{\circ}]_{\pi_1}+\delta$ and $[\mathbf{w}_{\mathrm{r}}^{\circ}]_{\pi_1}-\delta$ is no larger than at $[\mathbf{w}_{\mathrm{r}}^{\circ}]_{\pi_1}$, i.e.,
\begin{equation}
\left\{
\begin{aligned}
&\left |f_{2}\left([\mathbf{w}_{\mathrm{r}}^{\circ}]_{\pi_1}+\delta \right)\right |^{2} \leq \left |f_{2}\left([\mathbf{w}_{\mathrm{r}}^{\circ}]_{\pi_1}\right)\right |^{2},\\
&\left |f_{2}\left([\mathbf{w}_{\mathrm{r}}^{\circ}]_{\pi_1}-\delta \right)\right |^{2} \leq \left |f_{2}\left([\mathbf{w}_{\mathrm{r}}^{\circ}]_{\pi_1}\right)\right |^{2},
\end{aligned}
\right.
\end{equation}
According to the definition in \eqref{eq_obj_w1}, we obtain
\begin{equation}\small
\begin{aligned}
&\left\{
\begin{aligned}
&\left |\hat{k}[\mathbf{w}_{\mathrm{r}}^{\circ}]_{\pi_1}^{*}+\hat{b}+\hat{k}\delta\right |^{2} \leq \left |\hat{k}[\mathbf{w}_{\mathrm{r}}^{\circ}]_{\pi_1}^{*}+\hat{b}\right |^{2}\\
&\left |\hat{k}[\mathbf{w}_{\mathrm{r}}^{\circ}]_{\pi_1}^{*}+\hat{b}-\hat{k}\delta\right |^{2} \leq \left |\hat{k}[\mathbf{w}_{\mathrm{r}}^{\circ}]_{\pi_1}^{*}+\hat{b}\right |^{2}\\
\end{aligned}
\right.
\Rightarrow \\& \left\{
\begin{aligned}
&\mathfrak{R}\left(\left(\hat{k}[\mathbf{w}_{\mathrm{r}}^{\circ}]_{\pi_1}^{*}+\hat{b}\right)^{*}\hat{k}\delta\right)+
\left|\hat{k}\delta\right |^{2} \leq 0\\
&-\mathfrak{R}\left(\left(\hat{k}[\mathbf{w}_{\mathrm{r}}^{\circ}]_{\pi_1}^{*}+\hat{b}\right)^{*}\hat{k}\delta\right)+
\left|\hat{k}\delta\right |^{2} \leq 0\\
\end{aligned}
\right.
\Rightarrow 2\left|\hat{k}\delta\right |^{2}\leq 0,
\end{aligned}
\end{equation}
which contradicts the fact that $\hat{k}\neq 0$ and $\delta>0$. Thus, we can conclude that the assumption that $\mathbf{w}_{\mathrm{r}}^{\circ}$ has two elements that do not satisfy the CM constraint cannot hold when $\frac{\left[\mathbf{h}_{\mathrm{S2V}}\right]_{\pi_1}}{\left[\mathbf{h}_{\mathrm{S2V}}\right]_{\pi_2}} \neq \frac{\left[\mathbf{h}_{\mathrm{SI}}\right]_{\pi_1}}{\left[\mathbf{h}_{\mathrm{SI}}\right]_{\pi_2}}$. In other words, if there are any two elements that do not satisfy the CM constraint, they always have $\frac{\left[\mathbf{h}_{\mathrm{S2V}}\right]_{\pi_1}}{\left[\mathbf{h}_{\mathrm{S2V}}\right]_{\pi_2}} = \frac{\left[\mathbf{h}_{\mathrm{SI}}\right]_{\pi_1}}{\left[\mathbf{h}_{\mathrm{SI}}\right]_{\pi_2}}$.
\end{proof}

\begin{lemma}
If $\frac{\left[\mathbf{h}_{\mathrm{S2V}}\right]_{\pi_1}}{\left[\mathbf{h}_{\mathrm{S2V}}\right]_{\pi_2}} = \frac{\left[\mathbf{h}_{\mathrm{SI}}\right]_{\pi_1}}{\left[\mathbf{h}_{\mathrm{SI}}\right]_{\pi_2}}$ holds, there always exists another optimal solution of Problem \eqref{eq_problem_sub1_pro}, where at least one of $[\mathbf{w}_{\mathrm{r}}]_{\pi_1}$ and $[\mathbf{w}_{\mathrm{r}}]_{\pi_2}$  satisfies the CM constraint.
\end{lemma}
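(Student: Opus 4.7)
The plan is to exploit the degeneracy implied by the equal-ratio hypothesis: when $\gamma:=[\mathbf{h}_{\mathrm{S2V}}]_{\pi_1}/[\mathbf{h}_{\mathrm{S2V}}]_{\pi_2}=[\mathbf{h}_{\mathrm{SI}}]_{\pi_1}/[\mathbf{h}_{\mathrm{SI}}]_{\pi_2}$, the two pieces of information that coordinates $\pi_1$ and $\pi_2$ contribute to the objective and to the SI-equality constraint of Problem \eqref{eq_problem_sub1_pro} collapse into a single linear functional of the pair $([\mathbf{w}_{\mathrm{r}}]_{\pi_1},[\mathbf{w}_{\mathrm{r}}]_{\pi_2})$. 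This leaves one complex degree of freedom on the optimal level set, which I would spend to push at least one of the two moduli out to the CM boundary $r:=1/\sqrt{N_{\mathrm{r}}^{\mathrm{tot}}}$.

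Concretely, substituting $[\mathbf{h}_{\mathrm{S2V}}]_{\pi_1}=\gamma[\mathbf{h}_{\mathrm{S2V}}]_{\pi_2}$ and $[\mathbf{h}_{\mathrm{SI}}]_{\pi_1}=\gamma[\mathbf{h}_{\mathrm{SI}}]_{\pi_2}$ into the objective and the equality constraint of Problem \eqref{eq_problem_sub1_pro} shows that both depend on $([\mathbf{w}_{\mathrm{r}}]_{\pi_1},[\mathbf{w}_{\mathrm{r}}]_{\pi_2})$ only through the scalar $u:=\gamma[\mathbf{w}_{\mathrm{r}}]_{\pi_1}^{*}+[\mathbf{w}_{\mathrm{r}}]_{\pi_2}^{*}$. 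Writing $u^{\circ}:=\gamma[\mathbf{w}_{\mathrm{r}}^{\circ}]_{\pi_1}^{*}+[\mathbf{w}_{\mathrm{r}}^{\circ}]_{\pi_2}^{*}$, I would parametrize the affine set $\{u=u^{\circ}\}$ around $\mathbf{w}_{\mathrm{r}}^{\circ}$ by
\[
[\mathbf{w}_{\mathrm{r}}]_{\pi_1}=[\mathbf{w}_{\mathrm{r}}^{\circ}]_{\pi_1}+\delta^{*},\qquad
[\mathbf{w}_{\mathrm{r}}]_{\pi_2}=[\mathbf{w}_{\mathrm{r}}^{\circ}]_{\pi_2}-\gamma^{*}\delta^{*},\qquad \delta\in\mathbb{C},
\]
which is immediately seen to keep $u$ fixed, hence to preserve both the objective value $l_{1}$ and the SI-constraint value $l_{2}e^{j\omega_{2}}$ attained by $\mathbf{w}_{\mathrm{r}}^{\circ}$.

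For $\gamma\neq 0$, I would then choose $\delta$ so that one of $|[\mathbf{w}_{\mathrm{r}}]_{\pi_1}|$, $|[\mathbf{w}_{\mathrm{r}}]_{\pi_2}|$ lands exactly on $r$ while neither exceeds it. Writing $\delta=te^{j\alpha}$ for any fixed $\alpha$ and $t\geq 0$, both $|[\mathbf{w}_{\mathrm{r}}]_{\pi_i}|^{2}$ become convex quadratics in $t$ with positive leading coefficients $1$ and $|\gamma|^{2}$, starting strictly below $r^{2}$ at $t=0$ (by the standing assumption that both elements violate CM) and tending to infinity. Continuity therefore yields smallest positive hitting times $t_{1},t_{2}$ at which the respective moduli equal $r$, with both bounds respected on $[0,\min\{t_{1},t_{2}\}]$; taking $t=\min\{t_{1},t_{2}\}$ produces a new $\mathbf{w}_{\mathrm{r}}$ that is feasible for Problem \eqref{eq_problem_sub1_pro}, attains the same objective as $\mathbf{w}_{\mathrm{r}}^{\circ}$, and satisfies $|[\mathbf{w}_{\mathrm{r}}]_{\pi_1}|=r$ or $|[\mathbf{w}_{\mathrm{r}}]_{\pi_2}|=r$.

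The residual degenerate cases to handle are $\gamma=0$ (so $[\mathbf{h}_{\mathrm{S2V}}]_{\pi_1}=[\mathbf{h}_{\mathrm{SI}}]_{\pi_1}=0$) and the symmetric situation with $\pi_1,\pi_2$ swapped: in either case one of the two coordinates affects neither the objective nor the constraint and can trivially be reassigned any CM-compliant value, producing an optimal $\mathbf{w}_{\mathrm{r}}$ of the required form. The main obstacle I anticipate is ensuring that the one-parameter family really escapes the polydisk $\{|[\mathbf{w}_{\mathrm{r}}]_{\pi_i}|\leq r\}$ through its boundary rather than remaining strictly inside, but this reduces to the coercivity of the two convex quadratics above, so once the parametrization is written down the first-hitting-time argument closes the proof cleanly.
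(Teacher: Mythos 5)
Your proof is correct, and it reaches the conclusion by a genuinely different route than the paper. Both arguments start from the same observation (the paper calls it the \emph{constant-ratio property}): under the equal-ratio hypothesis, the $\pi_1,\pi_2$ contributions to the objective and to the SI equality constraint collapse to scalar multiples of the single functional $u=\gamma[\mathbf{w}_{\mathrm{r}}]_{\pi_1}^{*}+[\mathbf{w}_{\mathrm{r}}]_{\pi_2}^{*}$. From there the paper proceeds \emph{constructively}: it sets up a triangle with sides $\bar a=\left|u^{\circ}[\mathbf{h}_{\mathrm{S2V}}]_{\pi_2}\right|$, $\bar b$, $\bar c$ (the full-modulus magnitudes of the two terms), splits into two cases according to whether the triangle inequality $\bar a\geq|\bar b-\bar c|$ holds, and writes down explicit phases so that either both elements or exactly one element sits on the CM circle while the complex sum $\bar a e^{ju}$ is preserved. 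You instead parametrize the level set $\{u=u^{\circ}\}$ by a single real ray $\delta=te^{j\alpha}$ and run a first-hitting-time argument on the two coercive convex quadratics $\left|[\mathbf{w}_{\mathrm{r}}^{\circ}]_{\pi_1}+te^{-j\alpha}\right|^{2}$ and $\left|[\mathbf{w}_{\mathrm{r}}^{\circ}]_{\pi_2}-\gamma^{*}te^{-j\alpha}\right|^{2}$; stopping at $t=\min\{t_1,t_2\}$ delivers feasibility, unchanged objective and constraint values, and at least one modulus on the boundary. Your route avoids the paper's case split and the explicit trigonometric construction entirely, and it handles the degenerate $\gamma=0$ situation cleanly; what it gives up is the sharper conclusion the paper obtains in its Case 1, namely that when the triangle inequality holds \emph{both} elements can be pushed onto the CM circle. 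Since the lemma (and its use in Theorem 2, where the pair-reduction is iterated) only requires that at least one element become CM-compliant, your weaker conclusion is fully sufficient, and the continuity argument is arguably the more transparent of the two.
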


\begin{proof}
Based on $\frac{\left[\mathbf{h}_{\mathrm{S2V}}\right]_{\pi_1}}{\left[\mathbf{h}_{\mathrm{S2V}}\right]_{\pi_2}} = \frac{\left[\mathbf{h}_{\mathrm{SI}}\right]_{\pi_1}}{\left[\mathbf{h}_{\mathrm{SI}}\right]_{\pi_2}}$, we obtain
\begin{equation}
\begin{aligned}
&\frac{\left[\mathbf{h}_{\mathrm{S2V}}\right]_{\pi_1}}{\left[\mathbf{h}_{\mathrm{SI}}\right]_{\pi_1}} =\frac{\left[\mathbf{h}_{\mathrm{S2V}}\right]_{\pi_2}}{\left[\mathbf{h}_{\mathrm{SI}}\right]_{\pi_2}}\\
=&\frac{[\mathbf{w}_{\mathrm{r}}]_{\pi_1}^{*}\left[\mathbf{h}_{\mathrm{S2V}}\right]_{\pi_1}+
[\mathbf{w}_{\mathrm{r}}]_{\pi_2}^{*}\left[\mathbf{h}_{\mathrm{S2V}}\right]_{\pi_2}}{[\mathbf{w}_{\mathrm{r}}]_{\pi_1}^{*}
\left[\mathbf{h}_{\mathrm{SI}}\right]_{\pi_1}+
[\mathbf{w}_{\mathrm{r}}]_{\pi_2}^{*}\left[\mathbf{h}_{\mathrm{SI}}\right]_{\pi_2}}
\triangleq \chi.
\end{aligned}
\end{equation}
This indicates that $[\mathbf{w}_{\mathrm{r}}]_{\pi_1}^{*}\left[\mathbf{h}_{\mathrm{S2V}}\right]_{\pi_1}+
[\mathbf{w}_{\mathrm{r}}]_{\pi_2}^{*}\left[\mathbf{h}_{\mathrm{S2V}}\right]_{\pi_2}$ and $[\mathbf{w}_{\mathrm{r}}]_{\pi_1}^{*}\left[\mathbf{h}_{\mathrm{SI}}\right]_{\pi_1}+
[\mathbf{w}_{\mathrm{r}}]_{\pi_2}^{*}\left[\mathbf{h}_{\mathrm{SI}}\right]_{\pi_2}$ always have the same ratio regardless of the values of $[\mathbf{w}_{\mathrm{r}}]_{\pi_1}$ and $[\mathbf{w}_{\mathrm{r}}]_{\pi_2}$. We call this property the \emph{constant-ratio property}.

Since $\left |[\mathbf{w}_{\mathrm{r}}^{\circ}]_{\pi_1}\right |<\frac{1}{\sqrt{N_{\mathrm{r}}^{\mathrm{tot}}}}$ and $\left |[\mathbf{w}_{\mathrm{r}}^{\circ}]_{\pi_2}\right |<\frac{1}{\sqrt{N_{\mathrm{r}}^{\mathrm{tot}}}}$, it is easy to see that
\begin{equation}
\begin{aligned}
0 &\leq \left |[\mathbf{w}_{\mathrm{r}}^{\circ}]_{\pi_1}^{*}\left[\mathbf{h}_{\mathrm{S2V}}\right]_{\pi_1}+
[\mathbf{w}_{\mathrm{r}}^{\circ}]_{\pi_2}^{*}\left[\mathbf{h}_{\mathrm{S2V}}\right]_{\pi_2} \right |\\
& < \frac{1}{\sqrt{N_{\mathrm{r}}^{\mathrm{tot}}}}\left(|\left[\mathbf{h}_{\mathrm{S2V}}\right]_{\pi_1}|+|\left[\mathbf{h}_{\mathrm{S2V}}\right]_{\pi_2}|\right),
\end{aligned}
\end{equation}
and
\begin{equation}
\begin{aligned}
0 &\leq \left |[\mathbf{w}_{\mathrm{r}}^{\circ}]_{\pi_1}^{*}\left[\mathbf{h}_{\mathrm{SI}}\right]_{\pi_1}+
[\mathbf{w}_{\mathrm{r}}^{\circ}]_{\pi_2}^{*}\left[\mathbf{h}_{\mathrm{SI}}\right]_{\pi_2} \right |\\
& < \frac{1}{\sqrt{N_{\mathrm{r}}^{\mathrm{tot}}}}\left(|\left[\mathbf{h}_{\mathrm{SI}}\right]_{\pi_1}|
+|\left[\mathbf{h}_{\mathrm{SI}}\right]_{\pi_2}|\right).
\end{aligned}
\end{equation}

\begin{figure}[t]
\begin{center}
  \includegraphics[width=\figwidth cm]{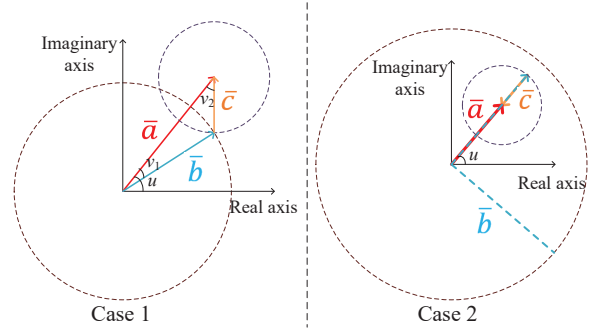}
  \caption{Illustration of the adjustment for the BFV's elements.}
  \label{Fig:w_adj}
\end{center}
\end{figure}
Next, we will consider two cases shown in Fig. \ref{Fig:w_adj}. We define $\bar{a}=\left|[\mathbf{w}_{\mathrm{r}}^{\circ}]_{\pi_1}^{*}\left[\mathbf{h}_{\mathrm{S2V}}\right]_{\pi_1}+
[\mathbf{w}_{\mathrm{r}}^{\circ}]_{\pi_2}^{*}\left[\mathbf{h}_{\mathrm{S2V}}\right]_{\pi_2}\right|$, $\bar{b}=\frac{1}{\sqrt{N_{\mathrm{r}}^{\mathrm{tot}}}}|\left[\mathbf{h}_{\mathrm{S2V}}\right]_{\pi_1}|$, and $\bar{c}=\frac{1}{\sqrt{N_{\mathrm{r}}^{\mathrm{tot}}}}|\left[\mathbf{h}_{\mathrm{S2V}}\right]_{\pi_2}|$. The corresponding angles in Fig. \ref{Fig:w_adj} are defined as follows
\begin{equation}
\left\{
\begin{aligned}
&u=\angle \left([\mathbf{w}_{\mathrm{r}}^{\circ}]_{\pi_1}^{*}\left[\mathbf{h}_{\mathrm{S2V}}\right]_{\pi_1}+
[\mathbf{w}_{\mathrm{r}}^{\circ}]_{\pi_2}^{*}\left[\mathbf{h}_{\mathrm{S2V}}\right]_{\pi_2}\right),\\
&v_{1}=\arccos\frac{\bar{a}^2+\bar{b}^2-\bar{c}^2}{2\bar{a}\bar{b}},\\
&v_{2}=\arccos\frac{\bar{a}^2+\bar{c}^2-\bar{b}^2}{2\bar{a}\bar{c}}.
\end{aligned}
\right.
\end{equation}

\emph{Case 1}: $\bar{a}\geq \left|\bar{b}-\bar{c}\right|$.

In this case, according to the constant-ratio property, it is easy to verify that $\left |[\mathbf{w}_{\mathrm{r}}^{\circ}]_{\pi_1}^{*}\left[\mathbf{h}_{\mathrm{SI}}\right]_{\pi_1}+
[\mathbf{w}_{\mathrm{r}}^{\circ}]_{\pi_2}^{*}\left[\mathbf{h}_{\mathrm{SI}}\right]_{\pi_2} \right |
\geq \frac{1}{\sqrt{N_{\mathrm{r}}^{\mathrm{tot}}}}\left(\left|\left[\mathbf{h}_{\mathrm{SI}}\right]_{\pi_1}|
+|\left[\mathbf{h}_{\mathrm{SI}}\right]_{\pi_2}\right|\right)$ holds. According to the triangle inequality, we can always find other $[\mathbf{w}_{\mathrm{r}}]_{\pi_1}$ and $[\mathbf{w}_{\mathrm{r}}]_{\pi_2}$ which satisfy the CM constraint. The basic idea is to adjust the phases of the two complex elements, and keep $[\mathbf{w}_{\mathrm{r}}]_{\pi_1}^{*}\left[\mathbf{h}_{\mathrm{S2V}}\right]_{\pi_1}+
[\mathbf{w}_{\mathrm{r}}]_{\pi_2}^{*}\left[\mathbf{h}_{\mathrm{S2V}}\right]_{\pi_2}=\bar{a}e^{ju}$ unchanged in Fig. \ref{Fig:w_adj}. The new solutions are generated as follows
\begin{equation}\label{eq_w_adj}
\left\{
\begin{aligned}
&[\mathbf{w}_{\mathrm{r}}^{\diamond}]_{\pi_1}=\frac{1}{\sqrt{N_{\mathrm{r}}^{\mathrm{tot}}}}e^{-j(u-v_{1}-\vartheta_{1})},\\
&[\mathbf{w}_{\mathrm{r}}^{\diamond}]_{\pi_2}=\frac{1}{\sqrt{N_{\mathrm{r}}^{\mathrm{tot}}}}e^{-j(u+v_{2}-\vartheta_{2})},
\end{aligned}
\right.
\end{equation}
where $\vartheta_{1}=\angle (\left[\mathbf{h}_{\mathrm{S2V}}\right]_{\pi_1})$ and $\vartheta_{2}=\angle (\left[\mathbf{h}_{\mathrm{S2V}}\right]_{\pi_2})$.
Then, it is easy to verify that $[\mathbf{w}_{\mathrm{r}}^{\diamond}]_{\pi_1}$ and $[\mathbf{w}_{\mathrm{r}}^{\diamond}]_{\pi_2}$ in \eqref{eq_w_adj} satisfy

\begin{equation}\label{eq_invir}
\left\{
\begin{aligned}
&[\mathbf{w}_{\mathrm{r}}^{\diamond}]_{\pi_1}^{*}\left[\mathbf{h}_{\mathrm{S2V}}\right]_{\pi_1}+
[\mathbf{w}_{\mathrm{r}}^{\diamond}]_{\pi_2}^{*}\left[\mathbf{h}_{\mathrm{S2V}}\right]_{\pi_2} \\ &~~~~~~=[\mathbf{w}_{\mathrm{r}}^{\circ}]_{\pi_1}^{*}\left[\mathbf{h}_{\mathrm{S2V}}\right]_{\pi_1}
+[\mathbf{w}_{\mathrm{r}}^{\circ}]_{\pi_2}^{*}\left[\mathbf{h}_{\mathrm{S2V}}\right]_{\pi_2},\\
&[\mathbf{w}_{\mathrm{r}}^{\diamond}]_{\pi_1}^{*}\left[\mathbf{h}_{\mathrm{SI}}\right]_{\pi_1}+
[\mathbf{w}_{\mathrm{r}}^{\diamond}]_{\pi_2}^{*}\left[\mathbf{h}_{\mathrm{SI}}\right]_{\pi_2}\\
&~~~~~~=[\mathbf{w}_{\mathrm{r}}^{\circ}]_{\pi_1}^{*}\left[\mathbf{h}_{\mathrm{SI}}\right]_{\pi_1}+
[\mathbf{w}_{\mathrm{r}}^{\circ}]_{\pi_2}^{*}\left[\mathbf{h}_{\mathrm{SI}}\right]_{\pi_2},
\end{aligned}
\right.
\end{equation}
which means that the designed $[\mathbf{w}_{\mathrm{r}}^{\diamond}]_{\pi_1}$ and $[\mathbf{w}_{\mathrm{r}}^{\diamond}]_{\pi_2}$ in \eqref{eq_w_adj} are also optimal solutions of Problem \eqref{eq_problem_sub1_pro} for which all elements satisfy the CM constraint.

\emph{Case 2}: $\bar{a} > \left|\bar{b}-\bar{c}\right|$.

In this case, according to the constant-ratio property, it is easy to verify that $\left |[\mathbf{w}_{\mathrm{r}}^{\circ}]_{\pi_1}^{*}\left[\mathbf{h}_{\mathrm{SI}}\right]_{\pi_1}+
[\mathbf{w}_{\mathrm{r}}^{\circ}]_{\pi_2}^{*}\left[\mathbf{h}_{\mathrm{SI}}\right]_{\pi_2} \right |
< \frac{1}{\sqrt{N_{\mathrm{r}}^{\mathrm{tot}}}}\left(\left|\left[\mathbf{h}_{\mathrm{SI}}\right]_{\pi_1}|
+|\left[\mathbf{h}_{\mathrm{SI}}\right]_{\pi_2}\right|\right)$ holds. This indicates that $[\mathbf{w}_{\mathrm{r}}]_{\pi_1}$ and $[\mathbf{w}_{\mathrm{r}}]_{\pi_2}$ cannot be adjusted such that both satisfy the CM constraint because the triangle inequality is not satisfied, i.e., the difference between the lengths of two sides is less than the length of the third side. However, we can adjust them such that one element satisfies the CM constraint. The basic idea is to enlarge the shorter side to satisfy the CM constraint, and then adjust the longer side to keep $[\mathbf{w}_{\mathrm{r}}]_{\pi_1}^{*}+
[\mathbf{w}_{\mathrm{r}}]_{\pi_2}^{*}\left[\mathbf{h}_{\mathrm{S2V}}\right]_{\pi_2}=\bar{a}e^{ju}$ unchanged in Fig. \ref{Fig:w_adj}.

Without loss of generality, we assume $\bar{b}\geq \bar{c}$ as shown in Fig. \ref{Fig:w_adj}.\footnote{When $\bar{b}< \bar{c}$, we can construct new optimal solutions in a similar manner.} Then, we can generate a new solution as follows
\begin{equation}\label{eq_w_adj1}
\left\{
\begin{aligned}
&[\mathbf{w}_{\mathrm{r}}^{\diamond}]_{\pi_1}=\left(\frac{1}{\sqrt{N_{\mathrm{r}}^{\mathrm{tot}}}}+\frac{a}{\left|\left[\mathbf{h}_{\mathrm{S2V}}\right]_{\pi_1}\right|}\right)e^{-j(u-\vartheta_{1})},\\
&[\mathbf{w}_{\mathrm{r}}^{\diamond}]_{\pi_2}=\frac{1}{\sqrt{N_{\mathrm{r}}^{\mathrm{tot}}}}e^{-j(u-\vartheta_{2}+\pi)}.
\end{aligned}
\right.
\end{equation}

It is easy to verify that $[\mathbf{w}_{\mathrm{r}}^{\diamond}]_{\pi_1}$ and $[\mathbf{w}_{\mathrm{r}}^{\diamond}]_{\pi_2}$ in \eqref{eq_w_adj1} satisfy \eqref{eq_invir}, which means that they are also an optimal solution of Problem \eqref{eq_problem_sub1_pro} for which only one element does not satisfy the CM constraint. Thus, we can conclude that if $\frac{\left[\mathbf{h}_{\mathrm{S2V}}\right]_{\pi_1}}{\left[\mathbf{h}_{\mathrm{S2V}}\right]_{\pi_2}} = \frac{\left[\mathbf{h}_{\mathrm{SI}}\right]_{\pi_1}}{\left[\mathbf{h}_{\mathrm{SI}}\right]_{\pi_2}}$ holds, we can always construct an optimal solution of Problem \eqref{eq_problem_sub1_pro}, where at most one element does not satisfy the CM constraint.
\end{proof}

Based on Lemma 1, we know that for any two elements of the BFV which do not satisfy the CM constraint, $\frac{\left[\mathbf{h}_{\mathrm{S2V}}\right]_{\pi_1}}{\left[\mathbf{h}_{\mathrm{S2V}}\right]_{\pi_2}} \neq \frac{\left[\mathbf{h}_{\mathrm{SI}}\right]_{\pi_1}}{\left[\mathbf{h}_{\mathrm{SI}}\right]_{\pi_2}}$ cannot hold. In other words, these elements always satisfy the \emph{constant-ratio property} in Lemma 2. Then, for any two elements that do not satisfy the CM constraint, we can always construct a new solution based on Lemma 2, where at most one element does not satisfies the CM constraint. Note that if there are three or more elements that do not satisfy the CM constraint, this construction can be repeated until only one or zero elements do not satisfy the CM constraint. Thus, we can conclude that there always exists an optimal solution of Problem \eqref{eq_problem_sub1}, for which at most one element of the optimal BFV does not satisfy the CM constraint.

\bibliographystyle{IEEEtran} % use IEEEtran.bst style
\bibliography{IEEEabrv,Xiao60GHz,Xiao5GnNOMA}

\end{document}